\definecolor{colour}{RGB}{225,225,225}
\definecolor{colour1}{RGB}{255,255,255}
\definecolor{colour3}{RGB}{1,105,201}
\definecolor{colour2}{RGB}{255,184,9}
\definecolor{colour4}{RGB}{253,51,1}
\definecolor{colour0}{RGB}{0,0,0}
\definecolor{colour$\ast$}{RGB}{255,255,255}
\newcommand{\Tbw}[4]{\tile{\edgeN{white}{#1}\edgeE{white}{#2}\edgeS{white}{#3}\edgeW{white}{#4}}}
\def\Tw(#1,#2,#3,#4) {\Tbw{#1}{#2}{#3}{#4}}
\newcommand\Tomg[4]{\tile{\edgeN{colour#1}{#1}\edgeE{colour#2}{#2}\edgeS{colour#3}{#3}\edgeW{colour#4}{#4}}}
\def\T(#1,#2,#3,#4) {\Tomg{#1}{#2}{#3}{#4}}
\def\TT(#1,#2,#3,#4) {\begin{tiles}*\Tomg{#1}{#2}{#3}{#4}\end{tiles}}
  \def\Snes(#1,#2,#3) {
    \starN{colour#1}{#1}
    \starE{colour#2}{#2}
    \starS{colour#3}{#3}
  }
  \def\SSew(#1,#2) {\begin{tiles}*
      \starE{colour#1}{#1}
      \starW{colour#2}{#2}
  \end{tiles}}
  \def\SSnew(#1,#2,#3) {\begin{tiles}*
      \starN{colour#1}{#1}
      \starE{colour#2}{#2}
      \starW{colour#3}{#3}
  \end{tiles}}
  \def\SSne(#1,#2) {\begin{tiles}*
      \starN{colour#1}{#1}
      \starE{colour#2}{#2}
  \end{tiles}}
\def\SSnes(#1,#2,#3) {\begin{tiles}* \Snes(#1,#2,#3) \end{tiles}}
  \def\SSesw(#1,#2,#3) {\begin{tiles}*
      \starE{colour#1}{#1}
      \starS{colour#2}{#2}
      \starW{colour#3}{#3}
  \end{tiles}}
  \def\SSns(#1,#2) {\begin{tiles}*
      \starN{colour#1}{#1}
      \starS{colour#2}{#2}
  \end{tiles}}
  \def\SSnsw(#1,#2,#3) {\begin{tiles}*
      \starN{colour#1}{#1}
      \starS{colour#2}{#2}
      \starW{colour#3}{#3}
  \end{tiles}}
  \def\SS(#1,#2,#3,#4) {\begin{tiles}*
      \starN{colour#1}{#1}
      \starE{colour#2}{#2}
      \starS{colour#3}{#3}
      \starW{colour#4}{#4}
  \end{tiles}}
  \def\S(#1,#2,#3,#4) {
    \starN{colour#1}{#1}
    \starE{colour#2}{#2}
    \starS{colour#3}{#3}
    \starW{colour#4}{#4}
  }
  \newcommand\tuple[2]{\ensuremath{#1#2}}
  \newcommand{\mt}[1]{\quad\text{#1}\quad}
  \newcommand{\field}[1]{\ensuremath{\mathds{#1}}}
  \newcommand{\identity}{\ensuremath{\mathds{1}}}
  \renewcommand{\op}[1]{\ensuremath{\mathbf{#1}}}
  \newcommand{\bounded}{\ensuremath{\mathcal B}}
  \newcommand{\hs}{\ensuremath{\mathcal H}}
  \newcommand{\lmin}{\ensuremath{\lambda_\text{min}}}
  \newcommand{\tc}{\textsc{tc}}
  \newcommand{\cl}{\textsc{cl}}
  \newcommand{\mcl}{\mathcal}
  \DeclareMathOperator{\lcm}{lcm}
  \newtheorem{theorem}{Theorem}
  \newtheorem{lemma}[theorem]{Lemma}
  \newtheorem{corollary}[theorem]{Corollary}
  \newcommand*{\balancecolsandclearpage}{%
    \close@column@grid
    \clearpage
    \twocolumngrid
  }
\begin{document}
  \title{Size-Driven Quantum Phase Transitions}

  \author{Johannes Bausch}
  \affiliation{DAMTP, University of Cambridge, Centre for Mathematical Sciences, Cambridge CB3 0WA, United Kingdom}

  \author{Toby S. Cubitt}
  \affiliation{Department of Computer Science, University College London, Gower Street, London WC1E 6BT, United Kingdom}

  \author{Angelo Lucia}
  \affiliation{Departamento de An\'alisis Matem\'atico, Universidad Complutense de Madrid, 28040 Madrid, Spain}
  \affiliation{QMATH, Department of Mathematical Sciences, University of Copenhagen, Universitetsparken 5, 2100 Copenhagen, Denmark}
  \affiliation{NBIA, Niels Bohr Institute, University of Copenhagen, Blegdamsvej 17, 2100 Copenhagen, Denmark}

  \author{David \surname{Perez-Garcia}}
  \affiliation{Departamento de An\'alisis Matem\'atico, Universidad Complutense de Madrid, 28040 Madrid, Spain}
  \affiliation{IMI, Universidad Complutense de Madrid, 28040 Madrid, Spain}
  \affiliation{ICMAT, Calle Nicol\'as Cabrera, Campus de Cantoblanco, 28049 Madrid}

  \author{Michael M. Wolf}
  \affiliation{Department of Mathematics, Technische Universit\"at M\"unchen, 85748 Garching, Germany}

  \begin{abstract}
    Can the properties of the thermodynamic limit of a many-body quantum system be extrapolated by
    analysing a sequence of finite-size cases? We present models for which such an approach gives
    completely misleading results: translationally invariant, local Hamiltonians on a square
    lattice with open boundary conditions and constant spectral gap, which have a classical product
    ground state for all system sizes smaller than a particular threshold size, but a ground state
    with topological degeneracy for all system sizes larger than this threshold.  Starting from a
    minimal case with spins of dimension 6 and threshold lattice size $15\times15$, we show that the
    latter grows faster than any computable function with increasing local spin dimension. The
    resulting effect may be viewed as a new type of quantum phase transition that is driven by the
    size of the system rather than by an external field or coupling strength. We prove that the
    construction is thermally robust, showing that these effects are in principle accessible to
    experimental observation.
  \end{abstract}

  \pacs{
    05.30.Rt, 
    05.65.+b, 
    64.60.an, 
  }
  \keywords{}
  \maketitle



  The thermodynamic limit of many-body quantum Hamiltonians is the predominant mathematical tool used to study macroscopic properties of physical systems. In order to understand the properties of a physical model, it is important to distinguish and recognise features that are a consequence of \emph{finite-size effects}, i.e.\ properties of the model which are not present in the thermodynamic limit but appear as a by-product of conditions which only hold for systems sizes smaller than some threshold. While some finite-size effects only produce small perturbations of the real model, this is not always the case. For example, relevant finite size effects for the distinct behaviour of antiferromagnets on even or odd system sizes have been proposed in~\cite{Lounis} and recently observed experimentally in~\cite{Guidi}.

  In this work we show that finite-size effects can in fact be dominant at arbitrary length scales, to the point of completely obscuring the physics of the thermodynamic limit. This phenomenon occurs not just in pathological examples, but even e.g.\ for translationally invariant Hamiltonians on low-dimensional spins arranged on a square lattice.

  {\it {\bf Main result 1:} We explicitly construct models exhibiting the following exotic finite-size effects: below a threshold lattice size with sides of length $N$, the ground state of the Hamiltonian is a non-degenerate product state in the canonical basis, i.e.~entirely classical, with a constant spectral gap above it. For system sizes greater than $N$, however, the low energy space is that of the Toric Code, which is in a sense as quantum as possible: the ground state exhibits topological degeneracy, and the system has anyonic excitations.

  Moreover, we calculate threshold lattice sizes $N$ for models with local spin dimensions up to $10$ (see \cref{tab:main-result}). Already for dimension $10$ the threshold size can be as large as $5.2\cdot10^{36534}$.}

\

  Since in practice in a real-world experiment the ground state cannot be accessed, and only the Gibbs state at some small but non-zero temperature can be prepared, we also prove for one family of models that:

  {\it {\bf Main result 2:} For any given measurement precision, there exists
    a finite temperature below which measurements on system sizes smaller than
    the threshold can not distinguish the thermal state from a classical state
    (i.e.\ product in the canonical basis), while the thermodynamic limit
    converges for low temperatures to the ground state of the Toric
    Code~\cite{1608.04449v1}.
    Even for measurements with errors of magnitude no larger than $10^{-6}$,
    the required temperatures are rather mild (see \cref{tab:main-result}).}

  \

  This sudden and dramatic change in the nature of the ground state may be viewed as a type of \emph{quantum phase transition}, driven by the system size rather than a varying external field or coupling strength.

  It has been known for some time that the critical values of external parameters (e.g.\ temperature, pressure) can depend on the size of the studied samples.
  Well-studied effects include rising melting points for small particles~\cite{Buffat1976,Goldstein1992}, structural temperature- or pressure-dependent phase transitions between different crystal lattices in thin-film samples and in nano-crystals~\cite{Tolbert1994,McHale1997,Rivest2011,Li2016}, where the energetically favourable structure differs from that in the thermodynamic limit.
  And charge density wave order transitions or superconductivity~\cite{Xi2015,Yu2015}, for which the critical temperature changes when approaching mono-layer sample sizes.

  Here, we exhibit a transition which is driven by the system size itself; the transition occurs at some critical size, without any external parameters varying at all.
  The effects which are most reminiscent of what we prove rigorously here are certain peculiar phenomena for mono-layer samples, or samples with 3 or 13 atom layers, for which the described phase transition cannot be observed anymore~\cite{Xi2015,Yu2015}; one suggested explanation is a lack of space for nucleation sites~\cite{Tolbert1994,Li2016}.

  \begin{table}
    \begin{ruledtabular}
      \begin{tabular}{r l l l l l l}
        $d$ & 4     & 6     & 7     & 8    & 9             & 10                   \\
        $N_d$ & 2     & 15    & 84    & 420  & $3.3\cdot10^7$ & $5.2\cdot10^{36534}$ \\
        $T_d [\frac{\Delta}{k_B}]$ & 0.058 & 0.050 & 0.043 & 0.038 & 0.020           & 5.9$\mu$
      \end{tabular}
    \end{ruledtabular}
    \caption{Threshold lattice sizes $N_d\times N_d$ for different spin dimension $d$ in our constructions. For lattice sizes larger than this threshold, finite-size effects
      suddenly disappear and the physics of the thermodynamic limit becomes accessible. Up to dimension $8$, a prime periodic Wang tiling construction gives a larger threshold size than embedding of Busy Beaver Turing machines. The critical temperature $T_d$ gives an estimate for the temperature at which the transition can still be discriminated with a fidelity of $1-10^{-6}$, as a function of the system's spectral gap $\Delta$, which here is equal to the interaction strength since the Hamiltonians are commuting.
      \label{tab:main-result}}

\end{table}

  \Cref{tab:main-result} shows an overview of the explicit examples we construct. The threshold
  system sizes $N_d$ from these examples show that large thresholds are possible with relatively
  small $d$-dimensional spins. These are of course lower bounds on the maximum possible threshold
  size for given local spin dimension; even larger size thresholds may be achievable by other
  constructions.
  Though our constructions can be straightforwardly generalised to produce size driven-transitions
  to other quantum phases, we have chosen to focus on a transition from classical to Toric Code for
  several reasons. First, because this makes our constructions stable against temperature and
  extensive perturbations. Second, and even more important, because the topological order present in
  the Toric Code allows to claim rigorously the existence of a phase transition even at finite
  sizes.

  In order to prove these effects mathematically rigorously, we deliberately construct examples for which there exists an analytic solution. However, this is not true for the general case: as the structure of the Hamiltonian becomes more complex, one expects the behaviour to become more erratic. Indeed, we know that for extremely complex Hamiltonians with very large local spin dimension the behaviour can even become uncomputable~\cite{Spectralgapundecidability}.

  It is important to emphasise that the dramatic finite-size effects exhibited here do \emph{not} depend on any careful tuning of coupling strengths, and occur for Hamiltonians without obvious separation of energy scales in their coupling constants or the matrix entries of the local interactions. Without this restriction, i.e.\ allowing interactions of magnitude $\order{1}$ and $\order{1/N^2}$, it is in fact trivial to construct a model whose ground state changes character at system size $\order{N}$, with the spectral gap closing as $\order{1/N}$. Our result is much stronger, in the sense that it does not allow such a prediction based solely on an analysis of the coupling strengths, nor from extrapolation of spectral data; in particular, the spectral gap of our model remains constant all the way up to the transition.

  \subsection{Hamiltonian Construction}
  For local spin dimension $d>3$, we construct a local, translationally invariant spin Hamiltonian $\op H^{(d)}$ on a 2D square lattice with open boundary conditions, such that there exists a threshold system size $N_d\times N_d$, up to which the ground state of $\op H^{(d)}$ is entirely classical (i.e.\ product in the canonical basis), whereas for larger lattice sizes the ground state is that of the Toric Code.
  The \emph{transition thresholds} $N_d$ for given local dimension $d$ in our explicit constructions are shown in \cref{tab:main-result}. For $d>6$, we give a general procedure for constructing models for which $N_d$ grows faster than any computable function.

  The \emph{Toric Code}---introduced by Kitaev~\cite{Kitaev}---is defined by a Hamiltonian on a two-dimensional spin-$1/2$ lattice. It is one of the simplest models exhibiting topological order~\cite{Wen2013,pachos2012introduction}.

  \begin{figure}
    \begin{center}
      \begin{tikzpicture}[
          draw=black,
          star/.style={
            draw=colour3,
            line width=1pt
          },
          plaquette/.style={
            draw=colour4,
            line width=1pt
          }
        ]
        \foreach \x in {0,1,2,3,4,5} {
          \foreach \y in {1,2,3} {
            \path[fill=black] (\x+.5,\y) circle[radius=.05] (\x,\y+.5) circle[radius=.05];
          }
          \path[fill=black] (\x+.5,4) circle[radius=.05];
        }
        \foreach \y in {1,2,3} {
          \path[fill=black] (6,\y+.5) circle[radius=.05];
        }
        \foreach \x in {0,1,2,3,4,5,6} {
          \draw (\x,1) -- (\x,4);
        }
        \foreach \y in {1,2,3,4} {
          \draw (0,\y) -- (6,\y);
        }

        \draw[plaquette] (1,2) rectangle (2,3);
        \path[plaquette,fill=colour4] (1.5,2) circle[radius=.05] (1,2.5) circle[radius=.05] (2,2.5) circle[radius=.05] (1.5,3) circle[radius=.05];
        \node at (1.51,2.47) {$p$};

        \draw[star] (3.5,2) -- (4.5,2);
        \draw[star] (4,1.5) -- (4,2.5);
        \path[star,fill=colour3] (3.5,2) circle[radius=.05] (4.5,2) circle[radius=.05] (4,1.5) circle[radius=.05] (4,2.5) circle[radius=.05];
        \node at (4.2,2.2) {$s$};

      \end{tikzpicture}
    \end{center}
    \caption{\leavevmode
      Plaquette and star interactions of the two-dimensional Toric Code Hamiltonian $\op H_\tc$. We assign a spin-$1/2$ particle to every lattice edge marked with a dot. $\op H_\tc$ is a sum of $4$-local interactions, plaquettes and stars, which are products of $\sigma^z$ and $\sigma^x$ operators, respectively.
    \label{fig:toriccode}}
  \end{figure}
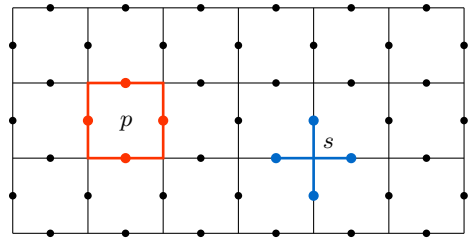

  We start out with a finite lattice as shown in \cref{fig:toriccode}. To every edge marked with a dot, we assign a $d$-dimensional spin $\field C^d=\hs_\tc\oplus\hs_\cl$ where $\hs_\tc=\field C^2$ and $\hs_\cl=\field C^{d-2}$, such that the overall Hilbert space on the lattice is a tensor product over all separate spins, i.e.
  \[\hs^{(d)}=\bigotimes(\hs_\tc\oplus\hs_\cl) \cong (\bigotimes \hs_\tc) \oplus (\bigotimes\hs_\cl) \oplus \hs',\]
  where $\hs'$ contains all mixed $\hs_\tc$ and $\hs_\cl$ terms.

  We define a purely classical Hamiltonian $\op H_\cl$ with support only on the subspace $\bigotimes\hs_\cl$, such that the ground state energy of $\op H_\cl$ is $-1$ for lattice sizes $N\le N_d$, and otherwise $\lmin(\op H_\cl)\ge1/2$. We then combine $\op H_\cl$ with $\op H_\tc$ in such a way that the spectrum below some energy $\lambda'>0$ is uniquely determined by one or other of these Hamiltonians, by giving an energy penalty for any state with support on $\hs'$. We define the overall Hamiltonian by $\op H^{(d)}:=\op H_\tc + \op H_\cl + \op H'$, where
  \[
    \op H':=C\sum_{i \sim j} \identity_\tc^i \otimes \identity_\cl^j + \identity_\cl^i \otimes \identity_\tc^j,
  \]
  where $i\sim j$ stands for a sum over any adjacent spins. $\identity_\tc$ denotes the projector on the $\hs_\tc$ subspace, and analogously $\identity_\cl$. Note that $\op H'$ only contains $2$-local interactions.

  In this way, any state $\ket\psi\in\hs^{(d)}$ supported on $\hs'$ will necessarily pick up an energy penalty of at least $C$. Choosing $C=1+\lmin(\op H_\cl)$ shifts this part of the spectrum to energies $\ge 1$.
  We can rescale $\op H_\tc$ to have its low-energy spectrum within $[0,1/2]$. The ground state of $\op H^{(d)}$ will thus be given by either $\op H_\tc$ or $\op H_\cl$, whichever has the smaller energy. In particular, the system will change abruptly from classical to topologically ordered with anyonic excitations when the lattice size $N$ surpasses the threshold $N_d$, while keeping a constant spectral gap.

  In order to construct a suitable classical Hamiltonian $\op H_\cl$, we will exploit the same
  locality structure as in the Toric Code--$4$-local star and plaquette interactions---since this
  does not increase the interaction range of the overall Hamiltonian $\op H^{(d)}$.  We will present
  two different constructions, based on a generalised tiling problem, which will give rise to
  different scaling of the threshold lattice size: one will be called \emph{Periodic Tiling}, the
  other \emph{Turing Machine Tiling}.
  We will only consider the case of open boundary conditions, which is the most natural one in this context.

  \paragraph{Tiling Hamiltonians.}
  It is convenient to express the interactions as a so-called \emph{tiling} problem with extra constraints, similar to the well-known Wang tiles. A Wang tile is simply a square tile with coloured edges, and the condition for placing two tiles next to each other is that their edge colours match. Despite this simple setup, it has been shown that the question of whether one can tile the entire plane with a finite set of Wang tiles is in fact undecidable~\cite{berger1966undecidability}, which shows that tiling can encode extremely complex behaviour.

  \begin{figure*}
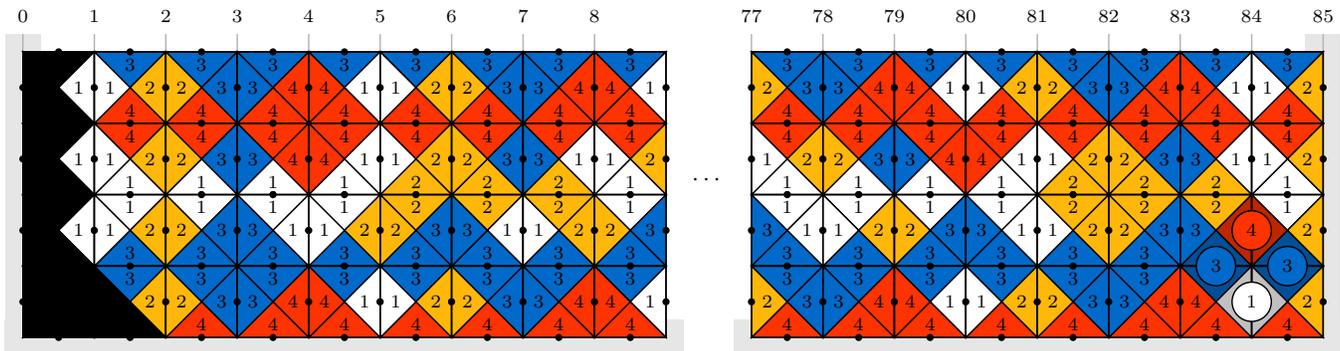

    \begin{tiles}*
      \path[fill=black!10] (.75,-3.25) rectangle (1.25,1.25);
      \path[fill=black!10] ((.75,-3.25) rectangle (10.25,-2.75);
      \foreach \x in {0,...,8}
      {
        \draw[black!30] (\x+1,-2.25) -- (\x+1,1.25);
        \node[] at (\x+1,1.5) {$\x$};
      }

      \T(0,1,0,0)
      \T(3,2,4,1)
      \T(3,3,4,2)
      \T(3,4,4,3)
      \T(3,1,4,4)
      \T(3,2,4,1)
      \T(3,3,4,2)
      \T(3,4,4,3)
      \T(3,1,4,4)
      \$

      \T(0,1,0,0)
      \T(4,2,1,1)
      \T(4,3,1,2)
      \T(4,4,1,3)
      \T(4,1,1,4)
      \T(4,2,2,1)
      \T(4,3,2,2)
      \T(4,1,2,3)
      \T(4,2,1,1)
      \$

      \T(0,1,0,0)
      \T(1,2,3,1)
      \T(1,3,3,2)
      \T(1,1,3,3)
      \T(1,2,3,1)
      \T(2,3,3,2)
      \T(2,1,3,3)
      \T(2,2,3,1)
      \T(1,3,3,2)
      \$

      \T(0,0,0,0)
      \T(3,2,0,0)
      \T(3,3,4,2)
      \T(3,4,4,3)
      \T(3,1,4,4)
      \T(3,2,4,1)
      \T(3,3,4,2)
      \T(3,4,4,3)
      \T(3,1,4,4)

    \end{tiles}
    $\cdots$
    \begin{tiles}*
      \path[fill=black!10] (8.75,-3.25) rectangle (9.25,1.25);
      \path[fill=black!10] (.75,-3.25) rectangle (9.25,-2.75);
      \foreach \x in {77,...,85}
      {
        \draw[black!30] (\x-76,-2.25) -- (\x-76,1.25);
        \node[] at (\x-76,1.5) {$\x$};
      }

      \T(3,3,4,2)
      \T(3,4,4,3)
      \T(3,1,4,4)
      \T(3,2,4,1)
      \T(3,3,4,2)
      \T(3,4,4,3)
      \T(3,1,4,4)
      \T(3,2,4,1)
      \$

      \T(4,2,1,1)
      \T(4,3,1,2)
      \T(4,4,1,3)
      \T(4,1,1,4)
      \T(4,2,2,1)
      \T(4,3,2,2)
      \T(4,1,2,3)
      \T(4,2,1,1)
      \$

      \T(1,1,3,3)
      \T(1,2,3,1)
      \T(1,3,3,2)
      \T(1,1,3,3)
      \T(2,2,3,1)
      \T(2,3,3,2)
      \T(2,4,3,3)
      \T(1,2,3,4)
      \$

      \T(3,3,4,2)
      \T(3,4,4,3)
      \T(3,1,4,4)
      \T(3,2,4,1)
      \T(3,3,4,2)
      \T(3,4,4,3)
      \T(3,1,4,4)
      \T(3,2,4,1)

      \path[fill=black, opacity=.25] (7,-2) -- (8,-1) -- (9,-2) -- (8,-3) -- cycle;
      \begin{scope}[
          shift={(7.5,-2.5)}
        ]
        \S(4,3,1,3)
      \end{scope}
    \end{tiles}
    \caption{\leavevmode
      Section of the prime periodic pattern for four plus one colours, which is $84$-periodic. The left edge and lower corner is enforced by giving the solid black square in the corner a bonus of 1, but penalising black to appear to the right of black on a horizontal edge: this way, the global pattern can be started uniquely with open boundary conditions. The horizontal edge colours form disjoint sets: starting from the bottom row, the colours are red $\{4\}$, blue $\{3\}$, white and yellow $\{1,2\}$, after which the cycle continues with red. This can be achieved using the 4-local star interactions, e.g.\ by allowing blue to only appear next to blue and white next to white. For the top row allowing two colours white and yellow, we alternate between them whenever the colour on the vertical edge above it is white. Within each row, these colours on the vertical edge count cyclically through sub-sequences of length $4$, $3$ and $4+3=7$, respectively, which yields the overall horizontal period $\lcm\{4,3,7\}=84$.\\
      Every $84$ tiles, the pattern necessarily exposes a unique local colour configuration, highlighted in the lower right corner. It can be penalised by a single star interaction and forces the spectrum of the associated Hamiltonian to $\ge1$ when the system size surpasses the threshold $N_5=84$.
    \label{fig:5colourpattern}}
  \end{figure*}

  It is easy to represent the tiling problem as a ground state energy problem of a classical,
  translationally invariant Hamiltonian $\op H_{\mathcal W}$ on the lattice in \cref{fig:toriccode}, and straightforward to verify that this representation only defines a single energy scale. As shown in \cref{fig:4colour}, each tile can be regarded as a plaquette on the lattice. The condition that neighbouring tiles share the same edge colour is thus automatically met. It is clear that for $c$ colours, we need a $c$-dimensional classical subspace $\hs_\cl$ for each spin, i.e.\ $d=c+2$. Working on this classical subspace, we want to find local Hamiltonian plaquette interactions between the spins surrounding a plaquette $p$---which we denote with $E_p$---that penalise any tile not in our set of allowed tiles $\mathcal W$.
  To achieve this, we define a local classical tile interaction via
  \begin{equation}
    \op h^{(p)}:=\sum_{w\in \mathcal W}a_w\bigotimes_{e\in E_p}\ketbra{w_{e,p}},
  \end{equation}
  where $w_{e,p}$ labels the colour on edge $e$ of tile $w$ placed at plaquette site $p$. The parameters $(a_w)_{\mathcal W}$ do not depend on the plaquette position, and the overall translationally-invariant tiling Hamiltonian is given by the sum over all plaquette sites in the lattice
  $
  \op H_\mathcal W:=\sum_{p}(\identity -\op h^{(p)})
  $.
  If $a_w=1$ for all $w$, one can show that $\op H_\mathcal W$ has ground state energy zero if and only if the set $\mathcal W$ tiles the lattice. If we want to give an energy ``bonus'' to (i.e.\ decrease the energy of) a specific tile $w$, we can set $a_w>1$. An energy penalty can be given by setting $a_w<1$. Each tiling thus has a net score---bonuses minus penalties minus mismatching tile pairs. The net score of a specific tiling gives the energy of the corresponding state of $\op H_\mathcal W$. In general, then, the ground state of $\op H_\mathcal W$ will maximise the number of tiles with a bonus while avoiding as many penalties as possible.

  A similar construction allows us to add extra star-shaped interactions, constraining tile edges adjacent to a corner. The overall Hamiltonian $\op H_\mathcal W+\op H_\mathcal S$ will then have an optimal ground state in the sense that the sum of penalties minus the sum of bonuses---for both tiles and stars---is minimised. The rigorous argument is presented in \cref{lemma:tiling-embedding} in the appendix.

  \paragraph{Periodic Tiling.}
  With as few colours $c$ as possible, we create a set of tiles and stars which permit a unique periodic tiling pattern in the ground state. The construction for any number of colours $c\ge3$ is described in the appendix, and an example for five colours can be seen in \cref{fig:5colourpattern}. One can show that this period grows at least exponentially with the number of colours \footnote{One can show that $N_{q+3}=p_{q}\#=\Omega(\exp((1+o(1))q\ln q))$, where $\#$ denotes the primorial function and $p_q$ the $q$\textsuperscript{th} prime.}. By penalising a pattern that occurs precisely once per global period---highlighted in \cref{fig:5colourpattern}---we can ensure that the ground state spectrum jumps to $\lmin(\op H_\cl)\ge 1$ for any larger square size. The transition threshold $N_d=N_{c+2}$ for this model is thus given simply by the horizontal pattern period.

  \paragraph{Turing Machine Tiling.}
  Starting from a number of colours $c\ge 6$, it becomes possible to embed a Turing machine into a set of tile and star interactions. We improve on an idea introduced by Robinson~\cite{robinson1971undecidability}---which has been exploited in~\cite{Spectralgapundecidability} to show undecidability of the spectral gap---by making use of the extra star constraints to significantly reduce the necessary local dimension.
  In this new construction, the transition threshold $N_d$ grows faster than any computable function and surpasses the threshold from the periodic tiling construction for $c\ge7$.

  A \emph{Turing machine} is an abstract machine for algorithmic computation proposed by Turing~\cite{Turing1937}, and is generally accepted as the standard mathematical model for formalising problems of computability and complexity. Such a machine is defined by a finite set of instructions. It is equipped with a finite internal memory, together with a two-way infinite tape where it can read and write symbols, and it can move left or right on the tape. The instructions tell the machine how to update the symbol at the current tape location, and which way to move on the tape, depending on the symbol it reads from the tape and its current internal state. In the field of computational complexity, the hardness of a problem is usually defined in terms of the amount of resources needed for a Turing machine to solve it, in terms of time needed and tape consumed.

  A Turing machine halts if its internal memory reaches a specified ``halting'' state, after which no further updates take place. We say a given machine is \emph{halting} if it eventually reaches a halting state. If we restrict to machines with a fixed number of states $q$ for its internal memory, and which read and write only two symbols, i.e.\ $0$ and $1$, then the set of possible halting machine programs is finite: there has to exist one that runs for longer, or at least as long as, any other. These machines are called \emph{Busy Beavers}, and their running time is called the \emph{Busy Beaver number} $S(q)$. It is known that $S(q)$ grows faster than any computable function~\cite{rado1962non} \footnote{\label{busy-beaver}A related quantity, which is also called the Busy Beaver number but is usually denoted $\Sigma(q)$, is defined as the largest number of non-blank symbols written out by the machine before terminating, and is a lower bound on $S(q)$. $\Sigma(q)$ also grows faster than any computable function.}.

  As in the case of the periodic tiling, we find a way of embedding a Busy Beaver Turing machine into the ground state of a classical Hamiltonian: as soon as the Busy Beaver halts, there will be a penalty, since at that point there is no valid way to continue updating the tape. The tiling is thus possible up to a square size of at least $S(q)/\sqrt 2$ \footnote{The constant factor of $\sqrt 2$ is due to the fact that in our construction the tape of the Turing machine is encoded diagonally with respect to the square lattice, and the head of the machine follows a zig-zag pattern, which in the worst case scenario can only reach a distance of $S(q)/\sqrt 2$ from the origin.}. As we need $c=q+2$ colours for a $q$ state Busy Beaver, we immediately find a transition threshold of $N_{q+4}\ge S(q)/\sqrt 2$.

  \subsection{Stability}

  Our model inherits its stability against extensive perturbations from the Toric Code: as we show
  in the appendix, in the thermodynamic limit the ground states of $\op H^{(d)} + \op V$ are in the
  same phase as the ground states of our model, for a sufficiently weak local perturbation $\op
  V$. Before the threshold lattice size, by standard perturbation theory, there exists a maximal
  perturbation strength (which will depend on the critical size $N_d$), below which the spectral gap
  of the perturbed Hamiltonian will not close: the perturbed ground state can be continuously
  deformed to the unperturbed one with local unitaries \cite{Bachmann_2011}, and thus it will still
  have the properties of a classical state.

  Moreover, in the appendix we also show that there exists a finite temperature $T_d$ below which the
  thermal state of the Hamiltonian will still be very close to the ground state for any system size
  up to the threshold $N_d$, meaning that any measurement will still reveal a classical state up to
  very small errors.  The temperature $T_d$ depends only inverse-logarithmically on the threshold
  size, and therefore its scaling with $d$ will be mild in the case of the prime periodic
  tiling. \Cref{tab:main-result} lists the temperatures corresponding to the various local
  dimensions $d$, which is a linear function of $\Delta/k_B$, where $\Delta$ is the spectral gap of
  the Hamiltonian and $k_B$ the Boltzmann constant. Since our models are commuting Hamiltonians, the
  spectral gap $\Delta$ is simply equal to the strength of the interactions between the spins.

  For the prime periodic tiling, we also show that if we go to the thermodynamic limit at finite
  temperature, and then send the temperature to zero (a procedure which is a more mathematically
  correct description of real implementations~\cite{Aizenman1981}), we will recover only ground
  states of the Toric Code. This shows that there is a complete disagreement between the
  mathematical predictions from the thermodynamic limit, and any measurement performed on systems
  below the threshold size. The Busy Beaver construction can be modified in order to show the same
  property, using the original construction of Robinson~\cite{Robinson1971}, at the cost of greatly
  increasing the local dimension.

  \subsection{Conclusion}
  By constructing two concrete classes of examples, we have shown that there exist translationally invariant, local Hamiltonians on a 2D square lattice with constant spectral gap and open boundary conditions, which belong to a topologically ordered phase in the thermodynamic limit, but appear to be classical for finite system sizes smaller than a certain threshold. This sudden change in the nature of the ground state---with constant spectral gap right up to the threshold size---is reminiscent of a phase transition, but one driven not by any external parameters, but by the size of the system. Furthermore, we have proven mathematically rigorously that these results are thermally robust: this ``size-driven'' transition can be observed at non-zero temperature, and the temperatures required to observe this even at very high precision are rather mild.

  The threshold size can grow extremely fast as a function of the local spin dimension---for one class it grows faster than any computable function---showing that even for physically reasonable systems with low local dimension, size-dependent behaviour can sometimes occur at system sizes that are inaccessible experimentally or numerically.

  A common approach to understanding the physical properties of many-body models in the thermodynamic limit is to analyse a growing sequence of finite system sizes---numerically or experimentally---and then extrapolate the characteristics of interest to the macroscopic limit~\cite{samaj2013introduction}. This approach has proven highly successful in numerous cases~\cite{LandauLifshitzVolume5, march1992electron,domb1983phase,Pirvu2012,Tagliacozzo2008}.
  Numerical simulations of lattice models play a key role in understanding the dynamics
  of a system, e.g.\ in lattice gauge theories~\cite{KogutLatticeQCD}, fluid
  dynamics~\cite{QuantumLatticeGas} and condensed matter physics~\cite{landau2001computer}. All
  these simulations are computationally intensive, so accessible lattice sizes are
  severely limited---e.g.\ for heavy quark simulations, current lattices have sizes reaching
  $96^3\times192$~\cite[Ch. 18]{Agashe:2014kda} (the larger dimension representing time).
  On the other hand, it has been shown that e.g.\ determining whether a system is gapped or gapless in the thermodynamic limit is an undecidable problem~\cite{Spectralgapundecidability}, albeit for extremely contrived and artificial models with extremely large local spin dimension.

  Our results show that there exist classes of local, physical systems on a 2D lattice of spins with moderate dimension for which---without some specialised analysis that takes into account the type of phenomena we have presented---it is impossible to tell with certainty whether the system behaves the same on macroscopic scales as it does for finite sizes. Whilst it is unlikely such pathological behaviour occurs in the types of system listed above, our results show that there exist new and interesting physical phenomenon that are not amendable to this kind of scaling analysis.

  Many variations of these results are possible. It is easy to e.g.\ reverse our construction and transition from topologically ordered at low system sizes to classical for large lattices, and it is clear that similar constructions using a different Hamiltonian than the Toric Code are possible. As usual when exotic models are found, we expect that the ability to switch properties of a Hamiltonian on and off depending on the system size could also lead to interesting applications in future.

  \begin{acknowledgments}
    J.\,B.\ acknowledges support from the German National Academic Foundation and the EPSRC (grant 1600123). T.\,S.\,C.\ is supported by the Royal Society.
    A.\,L.\ and D.\,P.\,G.\ acknowledge support from  MINECO (grant MTM2014-54240-P) and Comunidad de Madrid (grant QUITEMAD+-CM, ref. S2013/ICE-2801).
    A.\,L.\ acknowledges support from MINECO fellowship FPI BES-2012-052404, the European Research Council (ERC Grant Agreement no 337603), the Danish Council for Independent Research (Sapere Aude) and VILLUM FONDEN via the QMATH Centre of Excellence (Grant No. 10059).
    D.\,P.\,G.\ and M.\,M.\,W.\  acknowledge support the European CHIST-ERA project CQC (funded partially by MINECO grant PRI-PIMCHI-2011-1071).
    This work was made possible through the support of grant \#48322 from the John Templeton Foundation. The opinions expressed in this publication are those of the authors and do not necessarily reflect the views of the John Templeton Foundation.
    This project has received funding from the European Research Council (ERC) under the European Union's Horizon 2020 research and innovation program (grant agreement No 648913).
  \end{acknowledgments}

  \bibliographystyle{plain}
  \bibliography{bibliography}

  \balancecolsandclearpage
  \subsection{Embedding a Generalised Tiling into a Hamiltonian Spectrum}
  In this section, we rigorously formulate the embedding of the tiling problems we consider in this work into the spectrum of a local Hamiltonian. Instead of focusing only on star and plaquette interactions, we take an abstract point of view and define the notion of a \emph{generalised tiling}. Assume $\mathcal G=(V,E)$ is a finite undirected graph with coloured vertices, where we allow colours $C:=\{1,\ldots,c\}$, $c\in\field N$. Let $\mathcal L:=\{l:l\subset\mathcal G\}$ be a finite set of (local) interactions, e.g.\ all the $3$- or $4$-local star and plaquette interactions on a lattice as in \cref{fig:toriccode}. For all interactions $l\in\mathcal L$, we allow a finite set of \emph{pieces} $\mathcal T_l:=\{(c_v)_{v\in l}\}$---where the family $(c_v)_{v\in l}$ assigns a colour to every vertex in $l$---and a weight function $w_l:\mathcal T_l\rightarrow\field R$. Now assign a colour to each vertex in $\mathcal G$, e.g.\ by defining a family $(c_v)_{v\in V}$, $c_v\in C$. The \emph{score} of this assignment is then given by
  \[
    \mathrm{score}:=\sum_{l\in\mathcal L}\begin{cases}
      1-w_l(t_l) &\text{if $(c_v)_{v\in l}$ is a valid piece in $\mathcal T_l$}\\
      1 &\text{otherwise}.
    \end{cases}
  \]
  For $w_l(t_l)<1$, we can thus give a score penalty, and $w_l(t_l)>1$ gives a bonus to piece $t_l$ at site $l$. An assignment $w_l(t_l)=1$ is neutral and gives neither bonus nor penalty. Observe that \emph{not} including a piece in the piece set $\mathcal T_l$ is equivalent to giving it a weight of $0$. It is easy to see how this specialises to our tiling examples: in case of the periodic tiling and for $l$ a plaquette interaction in the bulk, the sets $\mathcal T_l$ would all be identical and correspond to the allowed $4$-local tiles. The $w_l$ then assign the bonuses or penalties, accordingly.

  We formulate the following lemma.
  \begin{lemma}\label{lemma:tiling-embedding}
    Define a Hilbert space $\hs:=\bigotimes_{v\in V}\field C^c$ over the interaction graph $\mathcal G$, assigning $c$-dimensional qudits to each vertex $v\in V$. Then there exists a classical Hamiltonian $\op H$ on $\hs$, diagonal in the computational basis, with $\mathcal L$-local interactions such that the eigenvalue $\lambda$ for a basis state $\ket\psi=\bigotimes_{v\in V}\ket{c_v}$ is given by the score of the associated generalised tiling, i.e.
    \[
      \lambda=\sum_{l\in\mathcal L}\begin{cases}
        1-w_l(t_l) &\text{if $\ket\psi|_l\in\mathcal T_l$}\\
        1 &\text{otherwise}.
      \end{cases}
    \]
    We denote with $\ket\psi|_l$ the restriction of $\ket\psi$ to the subspace $\bigotimes_{v\in l}\field C^c\le\hs$.
  \end{lemma}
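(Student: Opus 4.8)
The plan is to write the Hamiltonian explicitly as a sum of one diagonal local term per interaction $l\in\mcl L$, and then check that it acts on computational basis states exactly as claimed. For each $l\in\mcl L$ I would introduce the local operator
\[
  \op h^{(l)} := \sum_{t\in\mcl T_l} w_l(t)\bigotimes_{v\in l}\ketbra{t_v},
\]
acting on $\bigotimes_{v\in l}\field C^c$ and extended by the identity on the remaining vertices, where $t_v$ denotes the colour that the piece $t\in\mcl T_l$ assigns to vertex $v$. The full Hamiltonian is then $\op H:=\sum_{l\in\mcl L}(\identity-\op h^{(l)})$, which is manifestly diagonal in the computational basis and whose interaction supports are the sets $l\in\mcl L$, hence is $\mcl L$-local.

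Next I would compute the action of $\op h^{(l)}$ on a basis state $\ket\psi=\bigotimes_{v\in V}\ket{c_v}$. Since $\prod_{v\in l}\langle t_v|c_v\rangle$ equals $1$ when $t_v=c_v$ for every $v\in l$ and $0$ otherwise, $\ket\psi$ is an eigenvector of $\op h^{(l)}$ with eigenvalue $\sum_{t\in\mcl T_l} w_l(t)\,[\,t=\ket\psi|_l\,]$. The one thing that genuinely needs checking is that the pieces in $\mcl T_l$ are by definition distinct colour assignments on $l$, so at most one $t\in\mcl T_l$ coincides with $\ket\psi|_l$; thus the eigenvalue is $w_l(t_l)$ if $\ket\psi|_l$ equals a (necessarily unique) valid piece $t_l\in\mcl T_l$, and $0$ otherwise. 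Consequently $(\identity-\op h^{(l)})\ket\psi$ equals $(1-w_l(t_l))\ket\psi$ in the former case and $\ket\psi$ in the latter — exactly the two branches in the score.

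Finally I would sum over all $l\in\mcl L$: because every $\op h^{(l)}$ is diagonal, $\ket\psi$ is a simultaneous eigenvector, and its $\op H$-eigenvalue is the sum of the per-interaction contributions, which is precisely the score of the generalised tiling associated to $(c_v)_{v\in V}$; since the computational basis states form an eigenbasis of $\op H$, this pins down the entire spectrum. I do not anticipate a real obstacle here — the construction is a direct generalisation of the plaquette Hamiltonian $\op H_{\mcl W}$ of the main text — and the only point requiring care in the write-up is the bookkeeping that ``not including a piece in $\mcl T_l$'' is equivalent to assigning it weight $0$, so that absent configurations automatically land in the ``otherwise'' branch with contribution $1$.
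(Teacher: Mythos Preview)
Your proposal is correct and follows essentially the same approach as the paper: the paper defines exactly the same Hamiltonian $\op H=\sum_{l\in\mcl L}\bigl(\identity-\sum_{t\in\mcl T_l}w_l(t)\Pi_t\bigr)$ with $\Pi_t=\bigotimes_{v\in l}\ketbra{t_v}$ and verifies its action on a computational basis state. Your write-up is in fact slightly more careful than the paper's in making explicit that at most one piece $t\in\mcl T_l$ can match $\ket\psi|_l$, which is the only point where the computation could conceivably go wrong.
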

  \begin{proof}
    Define
    \[
      \op H:=\sum_{l\in\mathcal L}\left(\identity-\sum_{t\in\mathcal T_l}w_l(t)\Pi_t\right),
    \]
    where $\Pi_t:=\bigotimes_{v\in l}\ketbra{t_v}$ denotes the projector onto the valid piece $t\in\mathcal T_l$ for interaction $l\in\mathcal L$, and $t_v$ denotes the colour of vertex $v$ for piece $t$. Take a computational basis state $\ket\psi=\bigotimes_{v\in V}\ket{c_v}$. Then
    \begin{align*}
      \op H\ket\psi&=\sum_{l\in\mathcal L}\left(\ket\psi-\sum_{t\in\mathcal T_l}\ket\psi\begin{cases}
      w_l(t_l) &\text{if $\ket\psi|_l\in\mathcal T_l$}\\
      0 &\text{otherwise}
  \end{cases}\right)\\
  &=\lambda\ket\psi,
\end{align*}
and the claim follows.
\end{proof}
This allows us to conclude the following corollary.
\begin{corollary}
  The ground state energy of $\op H$ is determined by the lowest score assignment of the associated generalised tiling problem.
\end{corollary}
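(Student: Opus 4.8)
The plan is to read the corollary straight off \cref{lemma:tiling-embedding}, since essentially all of the content already lives in that lemma. First I would recall that the set $\{\bigotimes_{v\in V}\ket{c_v} : (c_v)_{v\in V}\in C^V\}$ is a complete orthonormal basis of $\hs$, and that by the lemma each such vector is an eigenvector of $\op H$ whose eigenvalue equals the score of the colouring $(c_v)_{v\in V}$. Because these eigenvectors span $\hs$, the operator $\op H$ can have no further eigenvalues, so its spectrum, counted with multiplicity, is exactly the multiset $\{\,\mathrm{score}((c_v)_{v\in V}) : (c_v)_{v\in V}\in C^V\,\}$. The fact that $\op H$ is genuinely diagonal in this basis — needed so that no subtler spectral argument is required — is immediate, since each projector $\Pi_t=\bigotimes_{v\in l}\ketbra{t_v}$ and the identity $\identity$ are both diagonal in the computational basis, hence so is every term of $\op H$.

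Next I would note that $V$ and $C$ are finite, so $C^V$ is a finite set and the minimum of the score over all colourings is attained. By definition the ground state energy is $\lmin(\op H)=\min\operatorname{spec}(\op H)$, which by the previous paragraph equals $\min_{(c_v)_{v\in V}\in C^V}\mathrm{score}((c_v)_{v\in V})$; any colouring achieving this minimum yields an explicit product-state ground state $\bigotimes_{v\in V}\ket{c_v}$, and conversely every ground state in the computational basis corresponds to a minimum-score assignment. If desired one can say slightly more: the entire low-lying spectrum of $\op H$ is in order-isomorphic correspondence with the assignments of the generalised tiling problem ordered by score, which is the form actually used later when combining $\op H_\cl$ with $\op H_\tc$. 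I do not anticipate any hard step here — the only things to be careful about are the finiteness of $C^V$ (so the minimum exists rather than merely an infimum) and the vanishing of off-diagonal elements — so the proof is essentially the observation that the smallest diagonal entry of a diagonal operator is its smallest eigenvalue.
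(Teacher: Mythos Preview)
Your proposal is correct and matches the paper's approach: the paper states the corollary without proof, treating it as an immediate consequence of \cref{lemma:tiling-embedding}, and what you have written is precisely the routine verification that fills in those details. The only minor remark is that the paper does not spell out the finiteness or diagonality points you mention, but these are exactly the right observations to make the ``immediate'' step rigorous.
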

Equipped with this machinery, it suffices to formulate generalised tiling problems on the square lattices as in \cref{fig:toriccode} with $3$- and $4$-local interactions, such that for lattice sizes below some threshold, the lowest score assignment has a score $\le-1/2$, and above the threshold the lowest score assignment has a score $\ge 1$. This way, combining the Toric Code Hamiltonian $\op H_\tc$ via \cref{lemma:combining-hamiltonian} creates a model with a size-induced transition from classical to topological ground state. Observe that we require our model to be translationally invariant.

\subsection{The Toric Code}
The Toric Code Hamiltonian $\op H_\tc$ is a sum of $3$- and $4$-local interactions
\[
  \op H_\tc:=-J\sum_s \op A^{\!(s)}-J\sum_p\op B^{(p)},
\]
with
$\op A^{\!(s)}:=\prod_{i\in s}\sigma_i^x$ a product of Pauli $\sigma^x$ acting on $4$ spins $i$ adjacent to vertex $s$ as seen in \cref{fig:toriccode}. The $\op B^{(p)}:=\prod_{i\in p}\sigma_i^z$ are defined analogously. We call the $\op A^{\!(s)}$ \emph{star} and the $\op B^{(p)}$ \emph{plaquette}-interactions, respectively. The free parameter $J>0$ is a coupling strength and can be used to rescale the spectrum.

\subsection{Prime Period Tiling}
The key idea is to create a tiling pattern that can tile the entire plane with a very large period $p$. We require that a certain locally detectable sub-pattern---i.e.\ using a star interaction---occurs exactly once per period. By disallowing this sub-pattern, the tiling will be possible up to a square of size $p\times p$, but once the grid surpasses this size, there will be at least one pattern violation, which can be penalised locally with a Hamiltonian term.

\paragraph{General Construction.}
For the general construction, we first regard the following discrete optimisation problem. Assume we have $q$ colours available.
We want to construct a family of tuples $(r_i)_{1\le i\le f}$, each of which stands for a row of colours $r_i=(r_{i1},r_{i2},\ldots,r_{im_i})$. These rows have to satisfy three constraints.
\begin{enumerate}
  \item There are fewer than $q$ rows overall, i.e.\ $f\le q$.
  \item Each row has fewer than $q$ colours, i.e.\ $m_i\le q\ \forall i$.
  \item For the first and last row, each colour $r_{ij}$ is picked from the $q$ colours available, i.e.\ $r_{1j},r_{fj}\in\{1,\ldots,q\}$---for all other rows, we leave out the last, i.e.\ $r_{ij}\in\{1,\ldots,q-1\}$.
\end{enumerate}
We can associate a period $p_i:=\sum_j^{m_i}r_{ij}$ to each row $i$. The rows $r_i$ are now chosen such that the objective function $p(q):=\lcm\{p_i:i\le I\}$---i.e.\ the overall period---is maximised.

We now give a description on how to translate such an optimal row family $(r_i)_i$ into a set of tiles and stars that enforces a unique horizontal tile sequence with periodicity $p(q)$. More specifically, for each row, we define tiles that allow a colour pattern
\begin{equation}\label{eq:pattern}
  1,\ldots,r_{i1},1,\ldots,r_{i2},\ldots,1,\ldots,r_{i(m_i-1)},1,\ldots,r_{im_i}
\end{equation}
on their vertical edge---i.e.\ the tiles form sub-periods, starting at 1 and counting to $r_{ij}$, then starting at 1 again, counting to $r_{i(j+1)}$ etc.. Cleverly choosing colours on the horizontal edges then a) make this pattern unique for each row, and b) enforce a unique stacking order of the rows overall---which in turn yields a $p(q)$-periodic global tile pattern.

To facilitate the tile notation, we use a few shorthands.
\begin{itemize}
  \item The last sub-period for each row has highest colour $r_{im_i}=:\bar r_i$.
  \item We sequentially enumerate the sub-periods with colours for use on the horizontal edges, i.e.\ $r_{11}\leftrightarrow 1, r_{12}\leftrightarrow 2,\ldots,\bar r_1\leftrightarrow m_1,r_{21}\leftrightarrow m_1+1$ etc.. The highest such label for every row is denoted with $h_i$, and the lowest $l_i$, e.g.\ for the first row $l_1=1$ and $h_1=m_1$. More rigorously, we have the sequences $h_0=0$, $h_i=h_{i-1}+m_i$ and $l_i=h_{i-1}+1$.
  \item The set of colours on the horizontal edges needed for the $i$\textsuperscript{th} row is denoted with $V_i:=\{l_i,\ldots,h_i\}$, respectively.
\end{itemize}

For every row $r_i$, we then define the tiles
\[
  \begin{tiles}*
    \Tw(t,c',b,c)
  \end{tiles}
  \quad\begin{minipage}{6.5cm}\raggedright
    $\forall t \in V_{(i-1\bmod f)},\quad b \in V_{(i \bmod f)}$ \\
    $\forall c=1,\ldots,r_{ij}$,
  \end{minipage}
\]
where $c'=c+1\mod{r_{ij}}$. As an example, consider the first row with $V_1=\{1,\dots, m_1\}$. We obtain a set of tiles
\[
  \begin{tiles}*
    \Tw($\ast_f$,2,$1$,1)
    \Tw($\ast_f$,3,$1$,2)
  \end{tiles}
  \cdots
  \begin{tiles}*
    \Tw($\ast_f$,1,$1$,$r_{11}$)
    \Tw($\ast_f$,2,2,1)
  \end{tiles}
  \cdots
  \begin{tiles}*
    \Tw($\ast_f$,1,2,$r_{12}$)
  \end{tiles}
  \cdots
  \begin{tiles}*
    \Tw($\ast_f$,1,$n_1$,$\bar r_1$)
  \end{tiles},
\]
where $\ast_f$ stands for any colour allowed on the bottom of the last row, i.e.\ $\ast_f\in V_f$. All other rows are defined analogously, where the top and bottom colours are chosen successively, i.e.\ for the $i$\textsuperscript{th} row, we use colours $V_i$ for the bottom and any $V_{i-1}$ for the top.

On their own, the tiles from different sets can be mixed at will. To enforce that each row can only be assembled from its own tile set, for each row $i$, we restrict to the following star configurations:
\begin{align*}
  &\begin{tiles}*
  \starN{white}{$c$}
  \starE{white}{$j$}
  \starW{white}{$j$}
  \starS{white}{$\ast$}
\end{tiles}
\quad\ 2\le c\le r_{ij}, \quad j \in V_i, \\
&\begin{tiles}*
\starN{white}{$1$}
\starE{white}{$j'$}
\starW{white}{$j$}
\starS{white}{$\ast$}
\end{tiles}
\mt{and}
\begin{tiles}*
  \starN{white}{$1$}
  \starE{white}{$j$}
  \starW{white}{$j'$}
  \starS{white}{$\ast$}
\end{tiles}
\quad\forall\ j \in V_i
\end{align*}
where $j':=l_i+(j+1\mod m_i)$. It is easy to verify that each row defines a unique $p_i$-periodic horizontal tile pattern as in \cref{eq:pattern}.
As the top colours of the $i$\textsuperscript{th} row are restricted to the bottom colours of the $i-1$\textsuperscript{th} row---modulo the numbers of rows $f$---the rows can be stacked above each other uniquely. Every block of rows $r_1,\ldots,r_f$, stacked vertically, thus defines a valid horizontally $p$-periodic tiling pattern for the plane.

In order to be able to detect this periodicity locally, we make use of the extra colour available in all but the first and last row due to constraint (3). For all $i=2,\ldots,f-1$, we add two tiles
\[
  \begin{tiles}*
    \Tw($h_{i\!-\!1}$,$d$,$h_i$,$\bar r_i$)
  \end{tiles}
  \mt{and}
  \begin{tiles}*
    \Tw($l_{i\!-\!1}$,2,$l_i$,$d$)
  \end{tiles}
\]
Alternative to the row sequence $\ldots,\bar r_i-1,\bar r_i,\mathbf 1,2,\ldots$, this allows counting $\ldots,\bar r_i-1,\bar r_i,\mathbf q,2,\ldots$. By adding the star penalties
\[
  \begin{tiles}*
    \starN{white}{1}
    \starS{white}{1}
    \starE{white}{1}
    \starW{white}{$h_1$}
  \end{tiles}
  \mt{and}
  \begin{tiles}*
    \starN{white}{q}
    \starS{white}{1}
    \starE{white}{$l_i$}
    \starW{white}{$h_i$}
  \end{tiles}
  \quad\forall\ 2<i<f-1,
\]
we ensure that whenever two consecutive rows complete a cycle in the same column, we mark the occurrence with a $q$ instead of a $1$. This way, if in the first row we finish a cycle with a $1$ and observe a $q$ right below, we know that the entire horizontal pattern has completed one period. To be more specific, every $p(q)=\lcm\{p_1,\ldots,p_K,p_L,p_f\}$ tiles, where $L=f-1$ and $K=f-2$, we have the pattern
\[
  \begin{tiles}*
    \Tw($h_K$,$q$,$h_L$,$\bar c_L$)
    \Tw($l_K$,$2$,$l_L$,$q$)
    \\ \$
    \Tw($h_L$,$1$,$h_f$,$\bar c_f$)
    \Tw($l_L$,2,$l_f$,$1$)
  \end{tiles},
  \mt{locally detectable via}
  \begin{tiles}*
    \starN{white}{$q$}
    \starS{white}{1}
    \starW{white}{$h_L$}
    \starE{white}{$l_L$}
  \end{tiles}.
\]
We call this sub-pattern a \emph{period marker} and penalise it with a weight of $2$.

So far we have constructed a tile set which can periodically tile the entire plane. By disallowing said period marker, we restrict the tileable square size to at most $p\times p$. Observe however that due to the freedom to shift sets of rows horizontally and the entire pattern vertically, there are a potentially huge number of possibilities to tile any square smaller than $p\times p$. We will thus add a special colour to fix this freedom, borrowing an idea from~\cite{gottesman2009quantum}. We will enforce a specific pattern in the lower left corner, which uniquely fixes the starting configuration for the bulk, without imposing any boundary condition, but instead by adding bulk interactions which will have the effect of favouring the desired configuration in the boundary. We add the following tiles:

\[
  \begin{tiles}*
    \T(0,0,0,0)
  \end{tiles}
  ,\quad
  \begin{tiles}*
    \edgeN{black}{}
    \edgeE{white}{$1$}
    \edgeS{black}{}
    \edgeW{black}{}
  \end{tiles}
  \mt{and}
  \begin{tiles}*
    \edgeN{white}{$3$}
    \edgeE{white}{$2$}
    \edgeS{black}{}
    \edgeW{black}{}
  \end{tiles}.
\]
We further disallow black appearing to the right of black using a star constraint, and give a bonus of $1/2$ to the all-black tile. It is then easy to verify that the best score tiling starts with the following configuration in the lower left corner:
\[
  \begin{tiles}
    \Tw(,,,,)
    \begin{scope}[
        shift={(1,0)}
      ]
      \edgeN{black}{}
      \edgeE{white}{$1$}
      \edgeS{black}{}
      \edgeW{black}{}
    \end{scope}

    \$

    \Tw(,,,,)
    \begin{scope}[
        shift={(1,-1)}
      ]
      \edgeN{black}{}
      \edgeE{white}{$1$}
      \edgeS{black}{}
      \edgeW{black}{}
    \end{scope}

    \$

    \T(0,0,0,0)
    \Tw(,,,,)
    \begin{scope}[
        shift={(2,-2)}
      ]
      \edgeN{white}{$3$}
      \edgeE{white}{$2$}
      \edgeS{black}{}
      \edgeW{black}{}
    \end{scope}

  \end{tiles}
\]

It is straightforward to verify that starting from this corner configuration, the plane can be tiled uniquely up to a grid size of $p\times p$ with a net score of $-1/2$, after which the net penalty jumps to a value $\ge 1$.

In \cref{tab:periods}, we list the cases $q=2,\ldots,8$ with a solution to the associated constraint problem and the resulting overall period $p$.  It is easy to see that in the case of 2 colours only, the extra black tile to remove degeneracies is redundant.

\begin{table}
  \begin{ruledtabular}
    \begin{tabular}{lll}
    	colours $q+1$ & line periods     & overall period $p$ \\ \hline
    	2$^\dagger$   & 2, 2             & 4                  \\
    	4             & 3, 5             & 15                 \\
    	5             & 4, 3, 7          & 84                 \\
    	6             & 5, 4, 3, 7       & 420                \\
    	7             & 6, 5, 7, 11      & 2310               \\
    	8             & 7, 6, 5, 11, 13  & 30030              \\
    	9             & 8, 7, 11, 13, 15 & 120120
    \end{tabular}
  \end{ruledtabular}
  \caption{Maximum tiling periods for a given number of colours (plus one special black colour needed for $q>2$). The second column shows how to distribute the periods between the lines. $^\dagger$For 2 colours, it is easy to see that the extra black tile is redundant.
  \label{tab:periods}}
\end{table}

\paragraph{Five Colour Tiling Example.}
We give the $q=4$ colour case as an explicit example. The tile set in \cref{fig:4colour} defines three disjoint sets of tiles, each of which can be assembled into horizontal lines, which in turn can be stacked above each other in a unique order. To avoid mixing the tiles from different sets on one same line, we add the following star operators with parameter $b_s =1$
\begin{figure}
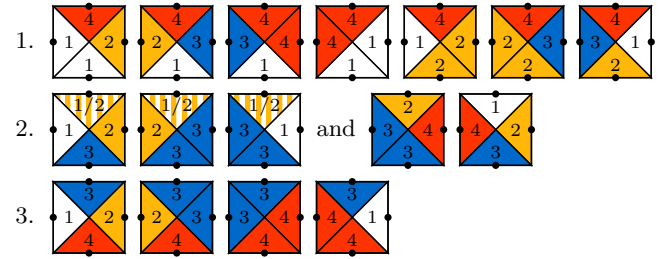

  \begin{tabular}{ll}
    1.&
    \TT(4,2,1,1)
    \TT(4,3,1,2)
    \TT(4,4,1,3)
    \TT(4,1,1,4)
    \TT(4,2,2,1)
    \TT(4,3,2,2)
    \TT(4,1,2,3)
    \\
    2.&
    \begin{tiles}*
      \tile{
        \edgeN{colour1}[colour2]{$1/2$}
        \edgeE{colour2}2
        \edgeS{colour3}3
        \edgeW{colour1}1
      }
    \end{tiles}%
    \begin{tiles}*
      \tile{
        \edgeN{colour1}[colour2]{$1/2$}
        \edgeE{colour3}3
        \edgeS{colour3}3
        \edgeW{colour2}2
      }
    \end{tiles}%
    \begin{tiles}*
      \tile{
        \edgeN{colour1}[colour2]{$1/2$}
        \edgeE{colour1}1
        \edgeS{colour3}3
        \edgeW{colour3}3
      }
    \end{tiles}
    and
    \TT(2,4,3,3)
    \TT(1,2,3,4)
    \\
    3.&
    \TT(3,2,4,1)
    \TT(3,3,4,2)
    \TT(3,4,4,3)
    \TT(3,1,4,4)
  \end{tabular}
  \caption{Four$+1$ colour tile set that defines a tiling of the plane with period $84$. The striped top in the second row denotes either of the colours $1$ or $2$.}
  \label{fig:4colour}
\end{figure}

\begin{align*}
  &\SS(2,1,$\ast$,1) ,\
  \SS(3,1,$\ast$,1) ,\
  \SS(4,1,$\ast$,1) ,\ \\
  &\SS(2,2,$\ast$,2) ,\
  \SS(3,2,$\ast$,2)
  \mt{and}
  \SS(1,1,$\ast$,2) ,\
  \SS(1,2,$\ast$,1) , \\
  &\SS($\ast$,3,$\ast$,3) ,\
  \SS($\ast$,4,$\ast$,4) .
\end{align*}
The third row then unambiguously assembles to a line with a horizontal period of $4$, while the vertical edges appearing on the first line periodically cycle through $1,2,3,4,1,2,3$ with a period of $7$. The horizontal edges of the second line are fixed by the line above and below, but there exists some freedom to choose the colours on the vertical edge. More specifically, we use the freedom of either counting $\ldots,2,3,\mathbf 1,2,\ldots$ or $\ldots,2,3,\mathbf 4,2,\ldots$ to detect when all three lines complete a period in the same column. We add a penalty for the configuration
\[
  \SS(1,1,1,2)
\]
by adding the corresponding operator multiplied by $b_s=-1$, which enforces colour $4$ to appear instead of colour $1$ whenever the first row finishes one cycle at the same time as the second one below. The combined period $p$ of the three lines is thus given by $\lcm(7,4,3)=84$, and it can be detected by penalising the configuration
\begin{equation}\label{eq:periodmarker}
  \SS(4,3,1,3)
\end{equation}
with a penalty of $2$ (i.e., with $b_s = -2$.)

The freedom to horizontally shift the lines relative to each other or the entire pattern vertically is fixed without adding boundary conditions with the special tiles
\[
  \begin{tiles}*
    \T(0,0,0,0)
  \end{tiles}
  ,\quad
  \begin{tiles}*
    \T(0,1,0,0)
  \end{tiles}
  \mt{and}
  \begin{tiles}*
    \T(3,2,0,0)
  \end{tiles}.
\]
We choose $a_w = 2$ for the first. Starting from there, the entire plane can be tiled uniquely up to a grid size of $84\times84$, after which the penalised star---\cref{eq:periodmarker}---naturally occurs and the net penalty is $\ge1$.
A section of the complete $5$-colour tiling can be seen in \cref{fig:5colourpattern}.

Generalising the prime tiling to higher dimensions, we obtain the periods as given in \cref{tab:periods}.

\subsection{Turing Machine Tiling}
\begin{table}
  \begin{ruledtabular}
    \begin{tabular}{llll}
    	states $|Q|$ & colours $c$ & $S(|Q|)$                               & threshold $N_{d}$            \\ \hline
    	3            & 6           & 21~\cite{lin1965computer}              & 14                           \\
    	4            & 6           & 107~\cite{TM42}                        & 75                           \\
    	5            & 7           & $\ge 4.7 \cdot 10^7$~\cite{TM52}       & $3.3 \cdot 10^7$             \\
    	6            & 8           & $\ge 7.4 \cdot 10^{36534}$~\cite{TM62} & $5.2 \cdot 10^{36534}$
    \end{tabular}
  \end{ruledtabular}
  \caption{Number of $2$ symbol Turing machine states $Q$ and tile colours $c=\max\{|A|^2+2,|A|+|Q|\}$ required for the embedding. $S(|Q|)$ is the corresponding Busy Beaver number, and $N_{d}$ denotes the lattice threshold size, where $d=c+2$.}
  \label{tab:busy-beaver}
\end{table}

A Turing machine is given by finite sets of states $Q$ and symbols $A$ with a transition function
$ \delta : Q \times A \to Q \times A \times \{ \text{left}, \text{right} \}$,
representing the set of instructions of the Turing machine. The machine is equipped with a tape, which is sequence of symbols arranged in a 1-dimensional line extending indefinitely in both directions, and initialised with a special ``blank'' symbol (which we will denote $0$ for simplicity of notation). The machine has an internal state $q \in Q$ and a head which sits over one of the symbols of the tape: at each step, the head reads the symbols $s$ underneath, and it will write the symbol $\bar s$, change its internal state to $\bar q$ and then move in direction $d \in \{ \text{left}, \text{right} \}$, where $(\bar q,\bar s, d) = \delta(q,s)$.
The machine starts in an initial state $q_0 \in Q$ and \emph{halts} if there is no forward transition for a given tuple $(q,s)$ \footnote{Such definition is equivalent to defining a special halting state $h$ for which no further transition is defined}.

We will show know how to construct a set of plaquette and star interactions in such a way that the ground state encodes the history of a run of a given Turing machine. The construction will involve the use of Wang tilings and of star interactions: the latter will allow us to greatly reduce the spin dimension needed by previous works which where based only on the tiling problem~\cite{robinson1971undecidability,berger1966undecidability}.

We assume the lattice is oriented as in \cref{fig:tmevolution}, where the Turing machine starts in the lower left corner.
We chose to encode the tape of the Turing machine at a given time step in diagonal direction across the square lattice (denoted by the thick grey lines in \cref{fig:tmevolution}), and movement in the orthogonal direction represents the time evolution. Moreover, we store the head position and internal state of the Turing machine by including the state $q$ in the tape, on the right of the symbol which will be read by the machine. Using this convention, we have that---even if the tape space is finite---it is extended by one symbol in both directions at each time step, and therefore the tape available will always be sufficient for the machine to run.

We will interpret colours on horizontal and vertical edges differently---horizontal either as pair of symbols $s_1s_2\in A\times A$ or special boundary colour
$\begin{tiles}* \path[draw=black,fill=colour2] (0,0) circle[radius=.20] node {$\tuple00$};\end{tiles}$
  or
$\begin{tiles}* \path[draw=black,fill=colour3] (0,0) circle[radius=.20] node {$\tuple00$};\end{tiles}$
  , and vertical either as symbol $s\in A$ or state $q\in Q$, the latter of which we highlight in red.

  As we did in the periodic tiling construction, we use special bulk interaction (which will be present everywhere in the lattice) to constrain the left and bottom boundaries. In order to do so, we use the following interactions
  \[
    \begin{tiles}*
      \edgeN{colour2}{\tuple00}
      \edgeS{colour2}{\tuple00}
      \edgeW{white}0
      \edgeE{white}{$\ast$}
    \end{tiles}
    \mt{,}
    \begin{tiles}*
      \starW{colour3}{\tuple00}
      \starE{colour3}{\tuple00}
    \end{tiles}
    \mt{and}
    \begin{tiles}*
      \edgeN{colour2}{\tuple00}
      \edgeE{colour4}{$q_0$}
      \edgeS{colour3}{\tuple00}
      \edgeW{white}{$0$}
    \end{tiles}.
  \]
  Note that the 2-local blue term is indeed a bulk interaction, and not a ``cut-off'' boundary term.
  By giving the last of these tiles a single bonus of 2---similar to the all black tile for the periodic tiling---we obtain precisely one choice for the left and lower grid edges, namely all $0$s as in \cref{fig:tmevolution}; in particular, the last shown tile with initial state $q_0$ correctly initialises the Turing machine in the lower left corner. This valid initial configuration defines the unique highest-net-bonus tiling possible.

  To avoid cases where we validly tile the plane without a TM head---i.e.\ with net bonus zero---we use a star interaction to give a bonus of $1/2$ for any \emph{white} symbol on a vertical edge
appearing to the left or right of another arbitrary symbol, i.e.
$$
1\times\begin{tiles}*
      \starW{white}{\tuple**}
      \starE{white}{\tuple**}
\end{tiles}\mt{,}
\frac12\times\begin{tiles}*
      \starW{white}{\tuple**}
      \starE{colour2}{\tuple00}
\end{tiles}
\mt{and}
\frac12\times\begin{tiles}*
      \starW{colour2}{\tuple00}
      \starE{white}{\tuple**}
\end{tiles}.
$$
We further give a penalty of $1$ for the white symbol appearing anywhere.  This way, in the bulk, the net contribution of $2\times 1/2-1=0$ for each of the white edges, whereas if they appeared on the left end of the plane a net penalty $\ge1/2$ would be inflicted.  A similar combination of bonus and penalty terms allows us to ensure that the lower edge is blue, and all other configurations obtain a net penalty of $\ge 1/2$ as well. Like that, there exists no configuration with net penalty $<1/2$ without the initial $q_0$ tile in the lower left corner.
  From now on, we treat the boundary symbols
$\begin{tiles}* \path[draw=black,fill=colour2] (0,0) circle[radius=.20] node {$\tuple00$};\end{tiles}$
  and
$\begin{tiles}* \path[draw=black,fill=colour3] (0,0) circle[radius=.20] node {$\tuple00$};\end{tiles}$
  as equivalent to
$\begin{tiles}* \path[draw=black,fill=white] (0,0) circle[radius=.20] node {$\tuple00$};\end{tiles}$.

  To implement the transitions rules we effectively need 6 different spins to interact (three for each time step): in fact, if the tape around the head reads $s, q, t$ for some $q \in Q$ and some $s,t \in A$, then it has to be updated to $\bar q, \bar s, t$ if $\delta(q,s) = (\bar q,\bar s, \text{left})$, while it has to be updated to $\bar s, t, q$ if instead $\delta(q,s) = (\bar q,\bar s, \text{right})$.

  Since we only have at our disposition 4-body interactions, in order to implement an effective 6 body interaction we will make use of the extra register we have allowed for in the horizontal edges, which will allow to ``synchronise'' a plaquette and a star interaction, as shown in \cref{fig:tmevolution}. This is done by defining,
  for every transition, a pair of tiles and stars, i.e.\ if $\delta(q,s)=(\bar q,\bar s,\text{left})$
  \begin{equation}
    \begin{tiles}*
      \starN{colour4}{$\bar q$}
      \starS{colour4}{$q$}
      \starW{white}{$\tuple s\ast$}
      \starE{white}{$\tuple{\bar s}s$}
    \end{tiles}
    \qand
    \begin{tiles}*
      \tile{
        \edgeN{white}{$\tuple{\bar s}s$}
        \edgeS{white}{$\tuple w\ast$}
        \edgeW{colour4}{$q$}
        \edgeE{white}{$w$}
      }
    \end{tiles},
  \end{equation}
  where $\ast$ represents any symbol, and for an analogous right transition
  \begin{equation}
    \begin{tiles}*
      \starN{white}{$\bar s$}
      \starS{colour4}{$q$}
      \starW{white}{$\tuple s*$}
      \starE{white}{$\tuple ws$}
    \end{tiles}
    \qand
    \begin{tiles}*
      \tile{
        \edgeN{white}{$\tuple ws $}
        \edgeS{white}{$\tuple w*$}
        \edgeW{colour4}{$q$}
        \edgeE{colour4}{$\bar q$}
      }
    \end{tiles}.
  \end{equation}
  Observe how the symbol pairs are necessary to uniquely couple the pair of interactions to obtain the left and right transition depicted in \cref{fig:tmevolution}, but are disregarded for any successive transition.
  The rest of the tape which is not affected by the transition rules has to be copied verbatim to the next time step.
  Implementing such bookkeeping tiles is straightforward, as we only need to take care of the extra---and in this situation unused---register in the horizontal edges, which is discarded when copying to vertical edges, and set to the ``blank'' symbol when copying from vertical to horizontal edges. More precisely, for every $a,b \in A$, we define
  \begin{equation}
    \begin{tiles}*
      \starN{white}{$a$}
      \starS{white}{$b$}
      \starW{white}{$\tuple a*$}
      \starE{white}{$\tuple b0$}
    \end{tiles}
    \qand
    \begin{tiles}*
      \tile{
        \edgeN{white}{$\tuple a0 $}
        \edgeS{white}{$\tuple b*$}
        \edgeW{white}{$a$}
        \edgeE{white}{$b$}
      }
    \end{tiles}.
  \end{equation}
  Overall, this construction thus requires $c=\max\{|A|^2+2, |A|+|Q|\}$ colours.
  It is easy to verify that starting from the initial tile, a square can be uniquely tiled with net bonuses $1$ if and only if the Turing machine does \emph{not} halt within its boundaries. All other tilings necessarily violate at least one constraint and thus have a net penalty $\ge1/2$. A sample evolution can be seen in \cref{fig:tmevolution}.

  \begin{figure}
    \centering
    \includegraphics[width=\linewidth]{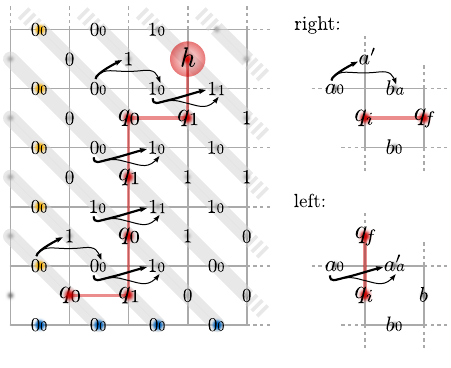}
    \caption{\leavevmode
      Embedding of a Turing machine into a tiling problem with extra star constraints. We chose a representation in which the Turing machine head sits in between the tape symbols, reading and writing the symbol on its left. Every grey horizontal slice shows the tape at one evolution step: it is initialised to $\ldots,0,q_0,0,\ldots$, where $q_0$ is the initial machine state, and every successive step is uniquely defined by the transition rules. Shown here is the 2 state Busy Beaver which halts after 6 steps. Since there is no valid tile with a halting state, the system necessarily frustrates for lattices larger $6/\sqrt 2\approx 4$ tiles on each edge.
      \\
      Each right transition $(q_i,a)\mapsto(q_f,a',\mathrm{right})$ or left transition $(q_i,a)\mapsto(q_f,a',\mathrm{left})$ is translated into a pair of 4-local plaquette and star interactions, as depicted to the right. Observe how in both cases the tile part of the interaction has to know the initial symbol $a$, which is why the star operator creates a temporary copy of it. This copy is shown as small symbols and numbers to the right of the actual tape content and ignored in any following transition.
      \\
      As the available space grows by one symbol in both directions at each step, there is always enough tape available for the Turing machine. The coloured terms are used to initialise this spare tape to $0$. Away from the head, additional interactions are used to copy currently unused tape segments forward. The exact construction with all interaction terms is explained in detail in the appendix.
    \label{fig:tmevolution}}
  \end{figure}

  The maximum number of steps any \emph{halting} Turing machine with $|Q|$ states and $2$ symbols can take before halting is called the \emph{Busy Beaver number} and is denoted by $S(|Q|)$. Defined in~\cite{rado1962non}, it is known to grow faster than any computable function.
  The staggering threshold sizes $N_d$ in \cref{tab:main-result} show that there is no hope to address the question of extrapolating physical properties of a general system solely with an increase in computational power.

  \subsection{Combining Hamiltonian Spectra}
  \begin{lemma}
    \label{lemma:combining-hamiltonian}
    Let $\op H_1$ and $\op H_2$ two local Hamiltonian defined on $\bigotimes_{u \in \Lambda} \field C^{d_1}$ and $\bigotimes_{u \in \Lambda} \field C^{d_2}$ for some interaction graph $\Lambda$. Let further $\mu\in\field R$. Then there exists a Hamiltonian $\op H$ on $\hs=\bigotimes_{u \in \Lambda} \field C^{d_1} \oplus \field C^{d_2}$ with the following properties:
    \begin{enumerate}
      \item Any eigenvector $v$ of $\op H$ with eigenvalue $\lambda\le\mu$ is given by an eigenvector of either $\op H_1$ or $\op H_2$, extended canonically to the larger Hilbert space \hs, with the same eigenvalue $\lambda$.
      \item $\op H$ is translationally invariant if $\op H_1$ and $\op H_2$ are.
      \item $\op H$ contains nearest neighbour interactions and otherwise leaves the interaction range of $\op H_1$ and $\op H_2$ intact.
    \end{enumerate}
  \end{lemma}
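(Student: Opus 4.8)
The plan is to build $\op H$ as the two given Hamiltonians, each extended canonically to $\hs$, plus a two-local \emph{interface} penalty that forces everything with energy $\le\mu$ to live in one of the two ``pure'' sectors. I will assume $\Lambda$ is connected (as for the square lattice used here); if it is not, one applies the statement component-wise.

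First I would fix notation. At each $u\in\Lambda$ write $P_1^u,P_2^u$ for the orthogonal projectors onto the two summands of $\field C^{d_1}\oplus\field C^{d_2}$, so $P_1^u+P_2^u=\identity$. For a \emph{type pattern} $\sigma\colon\Lambda\to\{1,2\}$ let $\hs_\sigma:=\bigotimes_{u}\field C^{d_{\sigma(u)}}\le\hs$; then $\hs=\bigoplus_\sigma\hs_\sigma$, and any operator assembled site-wise out of the $P_i^u$ is block-diagonal with respect to this decomposition. Writing $\op H_i=\sum_l h_i^{(l)}$ with $h_i^{(l)}$ supported on $l\subseteq\Lambda$, let $\tilde h_i^{(l)}$ be the operator on $\hs$ acting as $h_i^{(l)}$ on $\bigotimes_{u\in l}\field C^{d_i}$, as $0$ on its orthogonal complement inside $\bigotimes_{u\in l}(\field C^{d_1}\oplus\field C^{d_2})$, and as $\identity$ off $l$, and set $\tilde\op H_i:=\sum_l\tilde h_i^{(l)}$. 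These are block-diagonal as well, with $\tilde\op H_i|_{\hs_\sigma}=\sum_{l\,:\,\sigma(u)=i\ \forall u\in l}h_i^{(l)}$, the partial Hamiltonian on the type-$i$ region of $\sigma$. Then I would set
\[
  \op H:=\tilde\op H_1+\tilde\op H_2+C\sum_{u\sim v}\bigl(P_1^u\otimes P_2^v+P_2^u\otimes P_1^v\bigr),
\]
the last sum over edges of $\Lambda$, for a constant $C$ chosen below.

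Next I would verify the three properties. Block-diagonality gives $\op H=\bigoplus_\sigma\op H|_{\hs_\sigma}$. On $\hs_{\sigma\equiv1}$ one has $\tilde\op H_2=0$ and the interface term vanishes (no edge sees two types), so $\op H$ acts as $\op H_1$ under the canonical identification $\hs_{\sigma\equiv1}\cong\bigotimes_u\field C^{d_1}$, and symmetrically for $\sigma\equiv2$. If $\sigma$ is non-constant, connectedness of $\Lambda$ forces at least one edge $u\sim v$ with $\sigma(u)\ne\sigma(v)$, so the interface term is $\ge C\,\identity$ on $\hs_\sigma$; since $\|\tilde\op H_i|_{\hs_\sigma}\|\le\sum_l\|h_i^{(l)}\|=:\Gamma_i<\infty$, every eigenvalue of $\op H|_{\hs_\sigma}$ is $\ge C-\Gamma_1-\Gamma_2$. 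Taking $C:=\mu+\Gamma_1+\Gamma_2+1$ then pushes the entire spectrum of every non-constant sector above $\mu$. Hence for $\lambda\le\mu$ the $\lambda$-eigenspace of $\op H$ lies in $\hs_{\sigma\equiv1}\oplus\hs_{\sigma\equiv2}$, on which $\op H=\op H_1\oplus\op H_2$, so it is exactly the direct sum of the canonically embedded $\lambda$-eigenspaces of $\op H_1$ and of $\op H_2$ --- which is property (1). (If $\lambda$ is an eigenvalue of both, property (1) is to be read at this eigenspace level; the construction still yields $\op H_1$, resp.\ $\op H_2$, on each pure sector.) Property (2) is immediate: $P_i^u$ is site-independent and the added terms have a site-independent local form, so $\op H$ is translationally invariant whenever $\op H_1,\op H_2$ are. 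For (3), each $\tilde h_i^{(l)}$ is supported exactly on $l$, so $\op H$ carries the interaction hypergraphs of $\op H_1$ and $\op H_2$ unchanged, the only new terms being the nearest-neighbour interface penalties.

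I expect the only real obstacle to be the sector estimate: one has to notice that on a mixed sector $\tilde\op H_1+\tilde\op H_2$ degenerates to a \emph{partial} sum of interaction terms, which need not be bounded by $\|\op H_i\|$, yet is still bounded uniformly by the interaction norms $\Gamma_i$; weighing this against the interface penalty, which is at least $C$ once $\Lambda$ is connected and $\sigma$ is non-constant, is what pins down $C$ relative to $\mu$. Everything else --- block-diagonality, the identifications on the pure sectors, translational invariance, and the locality count --- is routine bookkeeping.
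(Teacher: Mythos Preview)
Your construction and argument match the paper's: extend $\op H_1,\op H_2$ canonically to $\hs$, add the same nearest-neighbour interface penalty $\sum_{u\sim v}(P_1^uP_2^v+P_2^uP_1^v)$, and use block-diagonality in the type-pattern (signature) decomposition to show that low-energy eigenvectors lie in the two pure sectors where $\op H$ reduces to $\op H_1$ resp.\ $\op H_2$. The only notable difference is the penalty strength: the paper sets $\delta=1+\mu$ (implicitly treating $\op H_i':=\op H_i\oplus\op 0$ as vanishing on every mixed sector), whereas you bound the partial Hamiltonians on mixed sectors by $\Gamma_i=\sum_l\|h_i^{(l)}\|$ and take $C=\mu+\Gamma_1+\Gamma_2+1$ --- your estimate is more careful for general local $\op H_i$, at the price of a coupling that grows with $|\Lambda|$.
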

  \begin{proof}
    Let $\identity_1$ and $\identity_2$ be the identity operators on $\field C^{d_1}$ and $\field C^{d_2}$, respectively. Let $\delta := 1+\mu$. Define further
    \[ \op H_0 := \delta \sum_{i \sim j} \identity_1^i \otimes \identity_2^j + \identity_2^i \otimes \identity_1^j ,\]
    where $i\sim j$ denotes any neighbouring spin pairs. Set $\op H := \op H_0 + \op H_1' + \op H_2' $, where $\op H_1':=\op H_1\oplus\,\op 0$ and analogously for $\op H_2'$.

    The last two claims are satisfied by construction. To prove the first point, note that $\op H_0$, $\op H_1'$ and $\op H_2'$ commute and thus share a common eigenbasis with spectrum $\sigma(\op H) = \sigma(\op H_0) + \sigma(\op H_1) + \sigma(\op H_2)$.
    Since $\delta > \mu$, any eigenstate of $\op H$ with eigenvalue $\lambda\le\mu$ thus has to be in the kernel $\ker \op H_0\equiv\mathrm{supp}(\op H_1'+\op H_2')=\mathrm{supp}\,\op H_1'\sqcup\mathrm{supp}\,\op H_2' $, and the claim follows.
  \end{proof}

  \section{Thermal stability}
  \subsection{Stability up to Transition Threshold Size}

  We will now show that for both the periodic tiling and the Busy Beaver model there exists a
  finite inverse temperature $\beta_d$ (depending on the local dimension due to its explicit dependence on the threshold size $N_d$),
  above which the thermal state of the Hamiltonian $\rho_\beta =
  \exp(-\beta \op H^{(d)})/Z_\beta$
  will still be very close to a classical state, i.e.\ to the classical ground state of $\op H^{(d)}$.

  We will recall the following observation of Hastings
 ~\cite{Hastings2007}: if $\rho_0$ is the density matrix corresponding
  to the ground state $P_0$ of $\op H$ (i.e.\ $P_0 = (\tr P_0) \rho_0$), then we have the following
  bound:

  \begin{widetext}
    \begin{multline}
      \norm{\rho_{\beta} - \rho_0 }_1 = \tr \abs{ \frac{ e^{-\beta \op H} - Z_\beta \rho_0}{Z_\beta}}
      = \tr \abs{ \frac{e^{-\beta \op H} - e^{-\beta\lambda_0}P_0}{Z_\beta} + \frac{ e^{-\beta\lambda_0}P_0 -
      Z_\beta \rho_0 }{Z_\beta}  } \notag \\
      \le \frac{ \tr \abs{ e^{-\beta \op H} -  e^{-\beta\lambda_0}P_0}}{Z_\beta} + \frac{\tr
      \abs{  e^{-\beta\lambda_0}P_0 - Z_\beta \rho_0}}{Z_\beta}
      = 2 \frac{\abs{Z_\beta -  e^{-\beta\lambda_0} \tr P_0}}{Z_\beta}
      \label{eq:hastings-bound-1}
    \end{multline}
  \end{widetext}
  where $\lambda_0$ is the ground state energy.
  Moreover, since $Z_\beta \ge \tr P_0 e^{-\beta \lambda_0} \ge e^{-\beta \lambda_0}$, we have that
  \begin{equation}
    \label{eq:hastings-bound-2}
    \frac{\abs{Z_\beta - \tr P_0 e^{-\beta \lambda_0}} }{Z_\beta} \le \sum_{\lambda \in \sigma(\op H)\setminus{\lambda_0}} e^{-\beta (\lambda-\lambda_0)},
  \end{equation}
  where eigenvalues are counted with their multiplicity.

  Let us denote by $\Delta$ the spectral gap of $\op H$, and by $\eta(m)$ the
  number of eigenstates with energy in the range $[m \Delta + \lambda_0, (m+1)\Delta + \lambda_0 )$.
    Then, following~\cite{Hastings2007}, if $\op H$ satisfies
    \begin{equation}
      \label{eq:hastings-condition}
      \eta(m) \le \frac{K^m}{m!},
    \end{equation}
    then we can bound the r.h.s. of \cref{eq:hastings-bound-2} by
    \[
      2 \sum_{m=1}^\infty \eta(m) e^{-\beta \Delta m} \le 2\sum_{m=1}^\infty \frac{(K e^{-\beta \Delta})^m}{m!} = 2(e^{K e^{-\beta \Delta}}-1).
    \]

    Since in our case $\op H^{(d)}$ is a commuting Hamiltonian, $\eta(m)$ grows as $\binom{N_d^2}{m} \le {N_d^{2m}}/{m!}$, which implies
    \begin{equation}
      \label{eq:gibbs-state-approx}
      \norm{\rho_\beta - \rho_0}_1 \le 2( e^{N_d^2 e^{-\beta \Delta}} -1 ).
    \end{equation}

    Fix a small $\epsilon > 0$, and let us now choose $\beta_d$ such that
    \[ 2( e^{N_c^2 e^{-\beta_d \Delta}} -1 ) \le \epsilon, \]
    where $N_d$ is the critical system size of $\op H^{(d)}$, meaning that
    \begin{equation}\label{eq:critical-temp}
      \beta_d = \frac{1}{\Delta} [2 \log N_d - \log \log (1+\frac \epsilon 2)].
    \end{equation}
    With this choice of $\beta_d$, we have that for all system sizes $N \le N_d$ and all
    $\beta \ge \beta_d$, the thermal state $\rho_\beta$ is $\epsilon$-close to the ground state of $\op H^{(d)}$,
    which as we have seen is a classical product state.

    On the other hand, if $N>N_d$, from the periodic tiling construction
    we see that the sector of $\op H^{(d)}$ corresponding to the tiling
    Hamiltonian $\op H_\cl$ necessarily picks up an energy penalty every
    period of $N_d$ (it actually picks up even more, given that the
    pattern is repeated vertically with a period corresponding to the
    number of colours, and therefore the energy penalty of every
    $N_d \times N_d$ square is at least $N_d/d$).
    This implies that every eigenstate of $\op H_\cl$ has a strictly positive energy density, and the spectrum of $\op H_\cl$ is contained in $[ (N/N_d)^2, \infty)$. This is not true for the Busy Beaver embedding, as it could be more favourable to terminate the computation, and then simply continue with a blank tape. The energy density decreases to zero in this case. Augmenting the construction with a base layer formed from Robinson tiles, it is however possible to make this Turing machine embedding similarly robust. We refer the reader to~\cite{Robinson1971} and~\cite[ch.~8]{Spectralgapundecidability} for more details.

      \subsection{Thermodynamic Limit}
      Let us now recall that a state in the thermodynamic limit is given by
      a linear, positive and normalised functional $\omega$ on the algebra of
      quasi-local observables $\mathcal A$, which is the (norm closure of the) inductive
      limit of the finite matrix algebras $\bounded_\Lambda = \bounded(\otimes_{i \in \Lambda}
      \hs^{(d)}_i)$, where $\Lambda$ is an ascending sequences of finite
      lattices converging to $\field{Z}^2$~\cite{Bratteli1997,Aizenman1981}.

        \begin{figure}
          \begin{center}
            \begin{tikzpicture}[
                draw=black,
                star/.style={
                  draw=colour3,
                  line width=1pt
                },
                plaquette/.style={
                  draw=colour4,
                  line width=1pt
                }
              ]
              \draw[draw=none,fill=black!30] (2,.8) -- (2,2.5) -- (3.5,4) -- (7.2,4) -- (7.2,3) -- (3,3) -- (3,.8) -- cycle;

              \foreach \x in {0,...,6} {
                \foreach \y in {1,...,4} {
                  \path[fill=black] (\x+.5,\y) circle[radius=.05] (\x,\y+.5) circle[radius=.05];
                }
                \path[fill=black] (\x+.5,5) circle[radius=.05];
              }
              \foreach \y in {1,...,4} {
                \path[fill=black] (7,\y+.5) circle[radius=.05];
              }
              \foreach \x in {0,...,7} {
                \draw (\x,.8) -- (\x,5.2);
              }
              \foreach \y in {1,...,5} {
                \draw (-.2,\y) -- (7.2,\y);
              }

              \draw[very thick] (3,.8) -- (3,3) -- (7.2,3);

              \foreach \x/\y in {2/.5,2/1.5, 3.5/3,4.5/3,5.5/3,6.5/3} {
                  \path[plaquette,fill=colour4] (\x,\y+1) circle[radius=.05];
              }

              \foreach \x/\y in {2.5/1,2.5/2,2.5/3, 3/3.5,4/3.5,5/3.5,6/3.5,7/3.5} {
                  \path[star,fill=colour3] (\x,\y) circle[radius=.05];
              }

            \end{tikzpicture}
          \end{center}
          \caption{\leavevmode
            Boundary $\partial\Lambda$ of a smooth-bounded rectangular region $\Lambda$ of the spin lattice.  Depicted in red are spins that share a plaquette interaction with at least one spin within $\Lambda$, and in blue the ones which also share a star interaction.  By definition $\partial\Lambda\cup\Lambda=\emptyset$.
          \label{fig:boundary}}
        \end{figure}
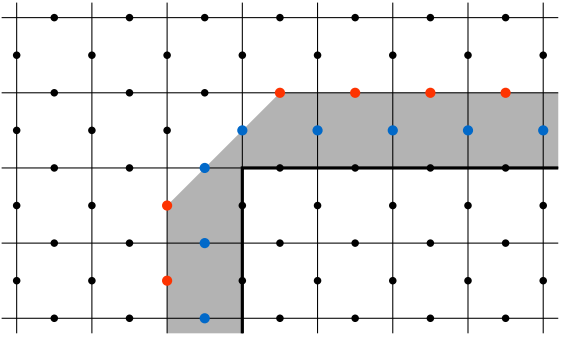

	Given a local Hamiltonian $\op H$ and a finite region $\Lambda$, we define its (exterior) boundary  $\partial \Lambda$  as the set of sites in the complement of $\Lambda$ for which there is an interaction term in $\op H$ acting non-trivially on sites of $\Lambda$ and $\partial \Lambda$ simultaneously, as shown in \cref{fig:boundary}.
	$\bar {\Lambda}$ is defined as $\Lambda \cup \partial \Lambda$ and, for a region $R$,   $\op H_{R}$ will denote the  restriction of $\op H$ to all interactions which are totally contained in $R$.   A \emph{ground state} is then defined as a state functional $\omega$, such that for any finite $\Lambda$, and any local observable $A\in\mathcal B_\Lambda$,
      \begin{equation}\label{eq:gs-condition}
        \omega(A^{\dagger}[\op H_{\bar \Lambda}, A]) \ge 0.
      \end{equation}
      This definition can be obtained by taking the zero temperature limit in the definition of finite temperature equilibrium states as defined by the KMS condition (i.e.\ the limit of increasing-volume Gibbs ensembles satisfying the KMS condition, see \cite[eq.~4.2]{Haag1967}).
      Loosely speaking, it expresses the intuitive understanding that any local perturbation should not decrease the energy of a ground state (see \cite{1608.04449v1}).

      Note that since both $A$ and $\op H_{\bar{\Lambda}}$ have finite support, it is possible to rewrite \cref{eq:gs-condition} in terms of the reduced density matrix of $\omega$ over $\bar \Lambda$, which we denote by $\rho_{\bar \Lambda}$:
      \[
        0\le \omega(A^\dagger[\op H_{\bar \Lambda}, A]) = \tr (\rho_{\bar \Lambda} A^\dagger[\op H_{\bar \Lambda}, A]),
      \]
      or equivalently
      \begin{equation}\label{eq:gs-density-matrix}
        \tr (\rho_{\bar \Lambda} A^\dagger\op H_{\bar \Lambda}A)\ge  \tr (\rho_{\bar \Lambda} A^\dagger A \op H_{\bar \Lambda})
      \end{equation}
      for all $\Lambda$ and all $A\in \bounded_\Lambda$. In turn, this
      implies that
      \begin{equation}\label{eq:gs-channel}
        \tr (\Phi(\rho_{\bar \Lambda}) \op H_{\bar \Lambda})\ge  \tr (\rho_{\bar \Lambda} \op H_{\bar \Lambda}),
      \end{equation}
      for any completely positive, trace preserving linear map $\Phi$,
      as can be seen by applying \cref{eq:gs-density-matrix} to the
      Kraus operators of $\Phi(\cdot) = \sum_i A_i \cdot A_i^\dag$.

      We will argue that the only ground states of the periodic tiling
      Hamiltonian $\op H^{(d)}$ are the ground states of the Toric Code. This
      in turns implies,  that if we take first the limit of
      $N$ going to infinity, and then we send the temperature to zero, we
      recover only ground states of the Toric Code.

      Key to our argument is that part of our Hamiltonian---i.e.\ $\op H_0$---is a ferromagnetic Ising-type interaction, where spin up and down are now the tiling and Toric code subspaces, respectively. We will follow the same proof technique used to show that the 2D Ising model with an external magnetic field has a unique ground state~\cite[ex.~5]{Aizenman1981} to show that any ground state in the thermodynamic limit of our model is completely in the Toric code subspace, by which we mean that for all $\Lambda$, $\omega(\Pi_{\tc,\Lambda}^\perp)=0$ for the projector onto the Toric code subspace $\Pi_{\tc,\Lambda}$ supported on $\Lambda$.

      Let us start with some preliminary observations, which will allow us
      to assume some extra properties of the ground state without loss of generality. Fix $\Lambda$ and let $\{M_i\}_i$ be a decomposition of the identity on $\bar \Lambda$ into orthogonal projectors, such that $\comm{M_i}{\op H_{\bar \Lambda}} = \comm{M_i}{\Pi_{\tc,\Lambda}^\perp} = 0$ for all $i$. We want to show that is sufficient to study $\omega$ restricted to the subspace corresponding to each $M_i$. This is the content of the following lemma.
      \begin{lemma}\label{lemma-signature}
		Let $\omega$ be a ground state. Fix $\Lambda$ and let $\{M_i\}_i$ be as above. Whenever $\omega(M_i) \neq 0$, define $\omega_i$ by
                $$\omega_i(A)=\frac{\omega(M_i A M_i)}{\omega(M_i)}.$$
                Then $\omega_i$  is also a ground state. Moreover, if for every $i$ it holds that $\omega_i(\Pi_{\tc,\Lambda}^\perp)=0$, then also $\omega(\Pi_{\tc,\Lambda}^\perp) = 0$.
      \end{lemma}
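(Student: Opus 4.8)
\emph{Proof proposal.} The statement has two parts, and I would dispatch the second first since it is immediate. Because $\{M_i\}_i$ resolves the identity into mutually orthogonal projectors with $[M_i,\Pi_{\tc,\Lambda}^\perp]=0$, one has
\[
  \Pi_{\tc,\Lambda}^\perp=\Big(\sum_i M_i\Big)\Pi_{\tc,\Lambda}^\perp\Big(\sum_j M_j\Big)=\sum_{i,j}M_i\Pi_{\tc,\Lambda}^\perp M_j=\sum_i M_i\Pi_{\tc,\Lambda}^\perp M_i
\]
using $M_iM_j=\delta_{ij}M_i$; applying $\omega$ and noting $0\le M_i\Pi_{\tc,\Lambda}^\perp M_i\le M_i$, the $i$-th summand $\omega(M_i\Pi_{\tc,\Lambda}^\perp M_i)$ is zero when $\omega(M_i)=0$ and equals $\omega(M_i)\,\omega_i(\Pi_{\tc,\Lambda}^\perp)$ otherwise, so if all $\omega_i(\Pi_{\tc,\Lambda}^\perp)$ vanish then so does $\omega(\Pi_{\tc,\Lambda}^\perp)$.

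For the first part I would first note that $\omega_i$ is plainly a linear, normalised ($\omega_i(\identity)=\omega(M_i)/\omega(M_i)=1$) and positive ($\omega_i(A^\dagger A)=\omega((AM_i)^\dagger(AM_i))/\omega(M_i)\ge0$) functional, and then verify the ground-state condition \cref{eq:gs-condition}. The plan is, given a probe $A\in\bounded_{\Lambda_1}$ for an arbitrary finite region $\Lambda_1$, to feed the dressed observable $B:=AM_i$ into the ground-state condition for $\omega$ on the enlarged region $\Lambda_0:=\Lambda_1\cup\bar\Lambda$ (which contains $\mathrm{supp}(B)$). Two bookkeeping facts then close the argument: (i) every interaction term of $\op H^{(d)}$ meeting $\mathrm{supp}(A)\subseteq\Lambda_1$ is contained in $\overline{\Lambda_1}\subseteq\overline{\Lambda_0}$, so the commutator is stable under the enlargement, $[\op H_{\overline{\Lambda_0}},A]=[\op H_{\overline{\Lambda_1}},A]$; and (ii) $M_i$ commutes with all of $\op H_{\overline{\Lambda_0}}$ — with $\op H_{\bar\Lambda}$ by hypothesis, and with each remaining term of $\op H_{\overline{\Lambda_0}}$ because those are either disjoint from $\mathrm{supp}(M_i)\subseteq\bar\Lambda$ or straddle the boundary $\partial\Lambda$, and such straddling star, plaquette and $\op H'$-penalty terms commute with $M_i$ since $M_i$ acts diagonally with respect to the block decomposition $\field C^d=\hs_\tc\oplus\hs_\cl$ of each spin. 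Given (i) and (ii),
\[
  B^\dagger[\op H_{\overline{\Lambda_0}},B]=M_iA^\dagger[\op H_{\overline{\Lambda_0}},A]M_i=M_iA^\dagger[\op H_{\overline{\Lambda_1}},A]M_i ,
\]
so the ground-state condition for $\omega$ at region $\Lambda_0$ gives $0\le\omega(B^\dagger[\op H_{\overline{\Lambda_0}},B])=\omega(M_i)\,\omega_i(A^\dagger[\op H_{\overline{\Lambda_1}},A])$, and dividing by $\omega(M_i)>0$ finishes the proof.

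I expect fact (ii) to be the step needing care: the hypothesis only supplies $[M_i,\op H_{\bar\Lambda}]=0$, whereas allowing an arbitrary probe region forces us to commute $M_i$ past the strictly larger neighbourhood Hamiltonian $\op H_{\overline{\Lambda_0}}$. This is harmless precisely because the decomposition $\{M_i\}$ that is actually used consists of projectors fixing the $\hs_\tc$/$\hs_\cl$ ``signature'' of the spins on $\partial\Lambda$, and every interaction term of $\op H^{(d)}$ — toric-code stars and plaquettes (which preserve each $\hs_\tc$ factor), the classical tiling star and plaquette terms (which preserve each $\hs_\cl$ factor), and the $\op H'$ terms $\identity_\tc\otimes\identity_\cl$ (which are diagonal in this decomposition) — commutes with each such projector individually. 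The rest is routine support-and-boundary bookkeeping.
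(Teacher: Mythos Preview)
Your proof is correct and follows the same line as the paper's: both parts hinge on the commutation of $M_i$ with the relevant operators, and your computation for the second claim matches the paper's exactly. You are in fact more careful than the paper on the first claim --- the paper simply asserts that $M_i$ commuting with ``the Hamiltonian'' makes \cref{eq:gs-condition} trivial, whereas you correctly flag that the stated hypothesis only provides $[M_i,\op H_{\bar\Lambda}]=0$ and fill the gap (for arbitrary probe regions $\Lambda_1$) by invoking the block-diagonal structure of the specific $M_i$ actually used in the applications.
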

      \begin{proof}
        $\omega_i$ is clearly a positive linear functional on local observables so that $\omega_i(\identity)=1$.  It can then be extended to a state on $\mathcal{A}$.
        The fact that $M_i$ commutes with the Hamiltonian makes $\omega_i$ trivially fulfil \cref{eq:gs-condition}, so it is a ground state.
        Finally, we observe that
        \begin{multline*} \omega(\Pi_{\tc,\Lambda}^\perp) = \tr(\rho_{\bar \Lambda} \Pi_{\tc,\Lambda}^\perp ) = \sum_i \tr(M_i \rho_{\bar \Lambda} \Pi_{\tc,\Lambda}^\perp M_i) = \\
        \sum_i \tr(\rho_{\bar \Lambda} M_i \Pi_{\tc,\Lambda}^\perp M_i) = \sum_i \omega(M_i) \omega_i(\Pi_{\tc,\Lambda}^\perp),
        \end{multline*}
        so that the last claim of the lemma follows.
      \end{proof}

      We will use such lemma to make two extra assumptions.
      The first one allows to assume that the ground state is supported, in each site, only in one of the two subspaces (TC or tiling). For that, given a finite region $R\subset \mathbb Z^2$ we consider signatures $\sigma=(\sigma_i)_{i\in R}$ where each $\sigma_i\in \{\text{TC, tiling}\}$. We denote by $P_\sigma$ the projector onto the set of states of signature $\sigma$. It is easy to see that they satisfy the condition of \cref{lemma-signature}. The second assumption is that $\rho_{\bar \Lambda}$ commutes with the Toric Code stabilisers. Again, it is sufficient to consider the projectors onto the eigenspaces of such stabilisers, and the result follows from \cref{lemma-signature}.

      As a second step, we will show that for any ground state for which a square boundary is completely supported in the TC subspace, the interior will be as well; for this we will assume that all square regions have smooth edges as in \cref{fig:boundary}.
      \newcommand{\sigTC}{\mathrm{TC}}
      \newcommand{\sigTiling}{\mathrm{tiling}}

      \begin{lemma}\label{cleaning-lemma}
		Take two concentric square regions $\Lambda' \subsetneq \Lambda$, and a ground state  $\omega$ of $\op H^{(d)}$ with a signature $\sigma$ on $\bar \Lambda$. Assume that $\sigma_s =\sigTC$ on all sites $s$ of $\partial \Lambda'\subset \Lambda\setminus \Lambda'$. Moreover, assume that $\rho_{\bar \Lambda}$ commutes with the Toric Code stabilisers that couple $\Lambda^\prime$ with $\partial \Lambda^\prime$.  Then $\sigma_s=\sigTC$ all sites $s\in \Lambda'$.
      \end{lemma}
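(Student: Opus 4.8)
The plan is a \emph{cleaning argument}: supposing the interior of $\Lambda'$ is not entirely in the toric-code subspace, I exhibit a local quantum channel on $\Lambda'$ that removes the offending tiling region and strictly lowers the energy, contradicting the ground-state channel inequality~\cref{eq:gs-channel}. First invoke the preliminary reductions: by~\cref{lemma-signature}, applied once to the signature projectors and once to the projectors onto the joint eigenspaces of the toric-code stabilizers (these commute, as each stabilizer preserves the $\hs_\tc/\hs_\cl$ splitting at every site), it suffices to treat a ground state $\omega$ satisfying, besides the lemma's hypotheses, that $\omega$ has a definite signature $\sigma$ on $\bar\Lambda$ and that $\rho_{\bar\Lambda}$ commutes with the toric-code stabilizers bridging $\Lambda'$ and $\partial\Lambda'$. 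Assume, for contradiction, that $T:=\{\,s\in\Lambda':\sigma_s=\sigTiling\,\}\neq\emptyset$. Since $\partial\Lambda'$ surrounds $\Lambda'$ and is entirely in the TC subspace, and (by the smooth-edge convention) captures the full plaquette/star interaction range, every site of $T$ has all its neighbours in $\overline{\Lambda'}\subseteq\bar\Lambda$; hence every interaction term of $\op H^{(d)}=\op H_\tc+\op H_\cl+\op H_0$ acting nontrivially on $T$ lies in $\op H_{\bar\Lambda}$, and the number $W$ of ferromagnetic edges of $\op H_0$ with exactly one endpoint in $T$ is $\ge 1$.

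Let $\Phi$ be the channel on $\Lambda'$ that acts as the identity off $T$ and, on $T$ (where $\rho_{\bar\Lambda}$ is supported entirely on $\bigotimes_{s\in T}\hs_{\cl,s}$ by the signature assumption), discards the input and prepares the marginal on $T$ of a toric-code ground state. Its Kraus operators lie in $\bounded_{\Lambda'}\subseteq\bounded_\Lambda$, so~\cref{eq:gs-channel} yields $\tr\!\big(\Phi(\rho_{\bar\Lambda})\op H_{\bar\Lambda}\big)\ge\tr\!\big(\rho_{\bar\Lambda}\op H_{\bar\Lambda}\big)$. I bound the difference term by term. Ferromagnetic part: after cleaning, all of $\overline{\Lambda'}$ lies in the TC subspace, every domain-wall edge along $\partial T$ is gone, and $\op H_0$ drops by exactly $\delta W$. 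Toric-code part: terms supported inside $T$ become satisfied by the prepared state, dropping from $0$ (they vanish on configurations containing a tiling-subspace site) to $-J$; terms straddling $\partial T$ factor, on the $T$-side, into a \emph{proper} sub-product of a star or plaquette operator, which anticommutes with some stabilizer and hence has zero expectation in any toric-code ground state, so they stay at $0$ --- hence $\op H_\tc$ does not increase. Tiling part: $\op H_\cl$ is supported only on $\bigotimes\hs_\cl$, so terms touching a TC site (in particular all terms straddling $\partial T$) contribute $0$ before and after, while terms supported inside $T$ pass from their tiling value to $0$, a net increase bounded above by the total bonus carried by tiles inside $T$.

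It remains to control that lost bonus, and the key structural input is the rigidity of the prime-period tile set: the only tile with a positive bonus is the all-black tile; it cannot sit in two horizontally adjacent positions (forbidden by a star constraint), and, being the unique tile with black on both of its horizontal neighbours, two all-black columns cannot be separated by other tiles either; hence the bonus-carrying plaquettes of any configuration lie in a single width-$O(1)$ strip and number $O(W)$, not $O(|T|)$ --- and for large $T$ the forced period markers in fact make $\op H_\cl$ decrease. Therefore $\tr\!\big(\Phi(\rho_{\bar\Lambda})\op H_{\bar\Lambda}\big)-\tr\!\big(\rho_{\bar\Lambda}\op H_{\bar\Lambda}\big)\le (c-\delta)W$ for an absolute constant $c$ fixed by the tile set, which is $<0$ once $\delta>c$; this holds with the natural choice of couplings, and in any case $\delta$ may be enlarged freely without affecting the low-energy spectrum (cf.~\cref{lemma:combining-hamiltonian}). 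This contradicts~\cref{eq:gs-channel}, so $T=\emptyset$, i.e.\ $\sigma_s=\sigTC$ for every $s\in\Lambda'$.

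The main obstacle is exactly this perimeter-versus-area accounting: one must establish that the lost tiling bonuses --- and the toric-code terms disturbed across $\partial T$ --- scale only with $W\sim|\partial T|$, so that one fixed choice of $\delta$ beats them uniformly over all $T\subseteq\Lambda'$; this rests on a genuine rigidity statement for the prime-period construction. The commutation hypothesis and the smooth-square convention enter in the reduction and the boundary geometry, and verifying that $\Phi$ is a legitimate channel and that every disturbed interaction sits in $\op H_{\bar\Lambda}$ is routine.
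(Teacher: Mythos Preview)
Your argument follows the same cleaning-channel strategy as the paper but diverges in the choice of channel and in the final energy bookkeeping. The paper does not prepare a toric-code ground-state marginal on $T$; instead it applies a first map $\Phi_1$ that replaces the state on $T$ by the \emph{maximally mixed} state on the TC subspace, and then a second channel $\Phi_2$ on $\bar\Lambda'$ that measures all toric-code stabilisers overlapping $\Lambda'$ and applies a syndrome-dependent Pauli correction. The point of $\Phi_2$ is that after it the state can violate at most one plaquette and one star operator (any pair of like defects is annihilated by the correction), so $\tr(\tilde\rho_{\bar\Lambda}\,\op H_{\bar\Lambda'})\le 2$ outright, with no analysis of what happens across $\partial T$. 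The contradiction then collapses to a single constant-vs-constant comparison: since $T\ne\emptyset$ forces at least one signature mismatch, and each mismatch carries a net penalty of at least $7/2$ (Ising cost minus the $1/2$ all-black bonus), one has $\tr(\rho_{\bar\Lambda}\,\op H_{\bar\Lambda'})\ge 7/2>2$.

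Your factorisation argument---that a proper sub-product of a stabiliser has zero expectation in the TC ground-state marginal, so the straddling terms stay at $0$---is correct and effectively substitutes for $\Phi_2$. But it forces you into a term-by-term perimeter-versus-area accounting and a separate rigidity statement for the all-black tiles, together with the fallback of enlarging $\delta$. The paper's error-correction step sidesteps all of this: it never needs to show that the lost bonus is $O(W)$, because it simply absorbs the single possible $1/2$ bonus per domain-wall edge into the $7/2$ and compares against the fixed upper bound $2$. Both routes are valid; the paper's is shorter and works with the coupling constants as given.
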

      \newcommand{\cptp}{\textsc{CPTP}\xspace}
      \begin{proof}
        Denote with $T\subseteq \Lambda'$ the set of all sites $\sigma\in\Lambda'$ that satisfy $\sigma = \sigTiling$.

        Consider the \cptp map $\Phi_1$ acting on $T$ that on \emph{all} those sites, traces out the tiling sector and replaces it with the maximally mixed state on the TC subspace, i.e.
        \[
          \Phi_1(\rho)=\tr_{T}(\rho) \otimes \qty( \frac{\identity_T^{(\sigTC)}}{\tr \identity_T^{(\sigTC)} }\oplus 0_T^{(\sigTiling)}).
        \]

        Let us now consider a map $\Phi_2$, acting on $\bar \Lambda^\prime$, which implements the following operations: first measures the Toric Code projectors which overlap with $\Lambda^\prime$, and then, conditioned on the syndrome of the measurement, applies a unitary operator which corrects as many as code errors as possible \footnote{One possible way to implement this procedure is to follow a sequence of local steps along an oriented tree structure, as described in~\cite{Dengis2014}}. This can be constructed by choosing as Kraus operators of $\Phi_2$ the product of the projector onto the different syndrome subspaces multiplied on the left with the corresponding unitary operator. We extend this map on the tiling subspace with the identity map, in order to make it a \cptp map.
        Then \cref{eq:gs-channel} implies that
        \begin{equation}\label{eq2:lemma2-thermal}
          \tr (\rho_{\bar \Lambda} \op H_{\bar \Lambda}) \le
          \tr (\Phi_2 \circ \Phi_1 (\rho_{\bar \Lambda}) \op H_{\bar \Lambda}).
        \end{equation}

        We now consider $\tilde \rho_{\bar \Lambda} := \Phi_2 \circ \Phi_1 (\rho_{\bar \Lambda})$.
        For any $\op h$ whose support is disjoint from $\Lambda^\prime$, we have that
       $\tr(\rho_{\bar \Lambda} \op h) = \tr(\tilde \rho_{\bar \Lambda} \op h)$ since $\tilde \rho_{\bar \Lambda^\prime}$ has the same reduced density matrix as $ \rho_{\bar \Lambda^\prime}$ outside of $\Lambda^\prime$.
       Thus \cref{eq2:lemma2-thermal} reduces to $ \tr (\rho_{\bar \Lambda} \op H_{\bar \Lambda^\prime}) \le\tr (\tilde\rho_{\bar \Lambda} \op H_{\bar \Lambda^\prime})$,
        where in $\op H_{\bar \Lambda^\prime}$ only those local Hamiltonian terms appear whose support intersects with $\Lambda^\prime$.
        To finish the proof, we need to find a contradiction assuming that $T$ is not empty.
        First of all, notice that $\tilde \rho_{\bar \Lambda}$ is completely supported on the TC subspace in $\bar \Lambda^\prime$, and that can violate at most 2 of the Toric Code stabilisers (at most one plaquette and one star operator, since any pair of violation would have been destroyed by the action of $\Phi_2$). So $\tr (\tilde\rho_{\bar \Lambda} \op H_{\bar \Lambda^\prime})$ can at most be equal to 2. On the other hand, even with the bonus gained in bottom-left corners of regions supported on the tiling subspace (i.e.\ the bonus of $1/2$ for the all-black tile used to resolve the ground state degeneracy for the periodic tiling pattern), the penalties coming from  mixed signatures in $\tr (\rho_{\bar \Lambda} \op H_{\bar \Lambda'})$ are higher (an overall penalty of at least $7/2$ for each mismatch).
	\end{proof}

	In the next lemma, we generalise the previous one for the case in which some sites on $\partial \Lambda'$ are in the tiling sector.
	\begin{lemma}\label{cleaning-lemma-2}
		Take two concentric square regions $\Lambda' \subsetneq \Lambda$, and a ground state  $\omega$ of $\op H^{(d)}$ with a signature $\sigma$ on $\bar \Lambda$. Moreover, assume that $\rho_{\bar \Lambda}$ commutes with the Toric Code stabilisers that couple $\Lambda^\prime$ with $\partial \Lambda^\prime$. Let $\alpha$ be the number of sites $s\in \partial \Lambda'$ for which $\sigma_s =\sigTiling$, and  $\beta$ the sum of signature mismatches within $\Lambda'$---i.e.\ the number of neighbouring $s,s'\in\Lambda$ for which $\sigma_s\neq\sigma_{s'}$---plus the number of period markers within $\Lambda'$. Then $\beta\le \frac{4}{7}(1+4\alpha)$.
	\end{lemma}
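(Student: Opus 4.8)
The plan is to re-run the argument from the proof of \cref{cleaning-lemma}, but now keeping the $\alpha$ tiling sites on $\partial\Lambda'$ in the energy balance instead of excluding them. Let $T\subseteq\Lambda'$ be the set of sites $s\in\Lambda'$ that lie in the tiling sector. As before, I would first apply the completely positive, trace-preserving map $\Phi_1$ which simultaneously traces out the tiling sector on \emph{all} sites of $T$ and replaces it by the maximally mixed state on the Toric Code subspace there, and then the map $\Phi_2$ which measures the Toric Code stabilizers overlapping $\Lambda'$ and, conditioned on the syndrome, applies the best available error correction --- well defined precisely because $\rho_{\bar\Lambda}$ commutes with the stabilizers coupling $\Lambda'$ to $\partial\Lambda'$. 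Setting $\tilde\rho_{\bar\Lambda}:=\Phi_2\circ\Phi_1(\rho_{\bar\Lambda})$, the ground-state channel inequality \cref{eq:gs-channel} gives $\tr(\rho_{\bar\Lambda}\op H_{\bar\Lambda})\le\tr(\tilde\rho_{\bar\Lambda}\op H_{\bar\Lambda})$; since neither map changes the reduced state outside $\Lambda'$, all interactions disjoint from $\Lambda'$ cancel and one is left with $\tr(\rho_{\bar\Lambda}\op H_{\bar\Lambda'})\le\tr(\tilde\rho_{\bar\Lambda}\op H_{\bar\Lambda'})$, exactly as in \cref{eq2:lemma2-thermal}, with $\op H_{\bar\Lambda'}$ collecting only the terms intersecting $\Lambda'$.

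Next I would bound the right-hand side from above. Because $\tilde\rho_{\bar\Lambda}$ is now supported entirely on the Toric Code subspace throughout $\Lambda'$, every tiling-plaquette, tiling-star, and ferromagnetic term of $\op H_{\bar\Lambda'}$ supported inside $\Lambda'$ contributes nothing; the only ones that need not vanish are those that also touch one of the $\alpha$ tiling sites of $\partial\Lambda'$, of which there are $\order{\alpha}$, each of $\order{1}$ norm. For the Toric Code part, $\Phi_2$ removes all stabilizer violations inside $\Lambda'$ except (i) the at most one residual star- and one residual plaquette-defect already accounted for by the ``$\le 2$'' of \cref{cleaning-lemma}, and (ii) the violations whose correcting string would have to pass through a tiling site of $\partial\Lambda'$; each such boundary site obstructs only a bounded number of stabilizers and correction paths. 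Collecting the constants fixed by our normalisation of $\op H^{(d)}$, this yields $\tr(\tilde\rho_{\bar\Lambda}\op H_{\bar\Lambda'})\le 2+8\alpha$.

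For the left-hand side I would reuse the energy accounting of \cref{cleaning-lemma}: each of the $\beta$ signature mismatches inside $\Lambda'$ carries a net penalty of at least $7/2$, even after crediting the bonus $1/2$ of the all-black corner tile (and any other bonuses of the tiling construction), and every period marker inside $\Lambda'$ is additionally penalised, so that $\tr(\rho_{\bar\Lambda}\op H_{\bar\Lambda'})\ge\tfrac72\beta$. Combining the two estimates gives $\tfrac72\beta\le 2+8\alpha$, i.e.\ $\beta\le\tfrac27(2+8\alpha)=\tfrac47(1+4\alpha)$, as claimed.

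I expect the one genuine obstacle to be the upper bound in the second step: pinning down precisely how many Toric Code stabilizer violations can survive $\Phi_2$ once $\partial\Lambda'$ is no longer entirely in the TC sector. One has to argue that the $\alpha$ tiling sites cut $\partial\Lambda'$ into at most $\alpha$ arcs, along each of which the usual anyon-pairing correction still removes all but an $\order{1}$ number of defects, and that the hypothesis that $\rho_{\bar\Lambda}$ commutes with the relevant stabilizers is exactly what makes the syndrome measurement in $\Phi_2$ non-disturbing --- so that the clean bound $2+8\alpha$, and hence the constant $\tfrac47$, really holds. Everything else is the routine bookkeeping already carried out for \cref{cleaning-lemma}.
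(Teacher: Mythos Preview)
Your proposal is correct and follows essentially the same route as the paper: apply $\Phi_2\circ\Phi_1$, use \cref{eq:gs-channel}, cancel the terms disjoint from $\Lambda'$, bound the right-hand side by $2+8\alpha$ and the left-hand side by $\tfrac72\beta$, and conclude. The one place you diverge slightly is in the justification of the $8\alpha$: you worry about extra stabilizer defects surviving $\Phi_2$ near the $\alpha$ tiling boundary sites, whereas the paper attributes the entire $8\alpha$ to the Ising-type terms of $\op H_0$ --- each of the $\alpha$ tiling sites on $\partial\Lambda'$ now borders TC sites in $\Lambda'$ and can violate up to four such interactions --- and keeps the ``at most $2$'' stabilizer count unchanged from \cref{cleaning-lemma}. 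So the obstacle you flag is not really there; the bookkeeping is simpler than you anticipate.
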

	\begin{proof}
          We follow the same procedure as in the proof for \cref{cleaning-lemma}, obtaining a new state $\tilde \rho_{\bar \Lambda}$ on $\bar \Lambda$,
          such that
          \begin{equation}\label{eq:cleaning-lemma-2-gs}
          \tr(\rho_{\bar \Lambda} \op H_{\bar \Lambda}) \le \tr(\tilde \rho_{\bar \Lambda} \op H_{\bar \Lambda}).
          \end{equation}
          Again, let $T \subset \Lambda^\prime$ the set of sites with tiling signature.
 Let us consider now the interactions $\op h$ in $\op H_{\bar{\Lambda}}$ and compare the values  $ \tr (\tilde{\rho}_{\bar \Lambda}\op h)$ and $ \tr ({\rho}_{\bar \Lambda}\op h)$.  As in previous lemma, if $\op h$ do not overlap with $\Lambda^\prime$, then  $ \tr (\tilde{\rho}_{\bar \Lambda}\op h)= \tr ({\rho}_{\bar \Lambda}\op h)$. Since $\tilde \rho_{\Lambda^\prime}$ is in the TC subspace, and can violate at most 2 stabilisers, its energy can be at most $2+8\alpha$ (the signature in $\partial \Lambda^\prime$ has not changed, and each spin in the tiling subspace can violate up to 4 Ising-type interactions).
 On the other hand, since there are at least $\beta$ signature mismatches for $\rho_{\Lambda^\prime}$, and each of them has an energy of at least $7/2$ (again, this is lower than $4$ because of the $1/2$ bonus given to the all-black tile), we have that
$\tr(\rho_{\bar \Lambda}\op H_{\bar{\Lambda^\prime}}) \ge \frac{7}{2} \beta $. Inserting these two bounds into \cref{eq:cleaning-lemma-2-gs}, we obtain the desired bound.
        \end{proof}

      In the following, we will show that if we pick the outer square in \cref{cleaning-lemma} large enough, we can always find an inner concentric square---of at least a third of the outer square's size---for which we can then apply \cref{cleaning-lemma} or \cref{cleaning-lemma-2}. In the pictures of the following lemma, we have coloured with black the spins which are in the TC subspace, and in yellow the ones which are not.

      \begin{lemma}\label{lemma:ring-cut}
		Take some square $\Lambda$ of side length $3k$, where $k=10^6 N_d^2$ and consider a ground state $\omega$ of $\op H^{(d)}$ with a signature $\sigma$ on $\bar \Lambda$.
        Subdivide $\Lambda$ into $k\times k$ squares:
        \[
          \tikz[draw=white]{
            \node[anchor=south west,inner sep=0] at (-1/6, -1/6) {\includegraphics[width=3.33333cm]{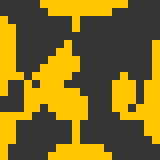}};
            \draw (0, 0) rectangle (3, 3);
            \draw (1, 1) rectangle (2, 2);
            \draw[dashed] (0, 1) -- (3, 1);
            \draw[dashed] (0, 2) -- (3, 2);
            \draw[dashed] (1, 0) -- (1, 3);
            \draw[dashed] (2, 0) -- (2, 3);
            \draw[decorate, decoration={brace, amplitude=1mm}, black] (-.3, 0) -- (-.3, 1) node[midway, xshift=-8] {$k$};
            \draw[black] (1.6, 1.6) -- (2.5, 2.2) -- (4, 2.2) node[anchor=west] {\upshape{centre} $C$};
          }
        \]
        Then $\sigma_i=\text{TC}$ for all $i$ in the centre $C$.
      \end{lemma}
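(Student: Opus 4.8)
The plan is to produce a concentric square $\Lambda'$ with $C\subseteq\Lambda'\subsetneq\Lambda$ whose exterior boundary $\partial\Lambda'$ lies entirely in the Toric Code subspace; \cref{cleaning-lemma}, applied with outer square $\Lambda$, then forces all of $\Lambda'$, and in particular all of $C$, into the TC subspace. By \cref{lemma-signature} and the two reductions stated after it, we may assume without loss of generality that $\omega$ carries a fixed signature $\sigma$ on $\bar\Lambda$ and that $\rho_{\bar\Lambda}$ commutes with every Toric Code stabilizer, so that \cref{cleaning-lemma} and \cref{cleaning-lemma-2} are available for every concentric inner square. Introduce $\Theta(k)$ pairwise-disjoint concentric square rings $\partial\Lambda_0,\partial\Lambda_1,\dots$, where the $\Lambda_j$ are concentric with $\Lambda$, all contain $C$, have side lengths increasing from $k$ up to $\approx 3k$, and are spaced far enough apart that the (boundedly thick) $\partial\Lambda_j$ do not overlap. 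Suppose, for contradiction, that $C$ contains a tiling site. Then \cref{cleaning-lemma} forces \emph{every} $\partial\Lambda_j$ to contain a tiling site as well, since otherwise that ring would make $C\subseteq\Lambda_j$ entirely TC.

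The contradiction is combinatorial: it combines the bound $\beta\le\tfrac47(1+4\alpha)$ of \cref{cleaning-lemma-2} with the structural fact established earlier that every all-tiling $N_d\times N_d$ square carries $\Omega(N_d/d)$ period markers (this is exactly what makes the tiling energy density positive above threshold). The latter gives, for any connected tiling region of area $A$ and boundary length $L$, at least $\gtrsim(A-LN_d)/(N_d d)$ period markers and a diameter $\lesssim L$; hence, via \cref{cleaning-lemma-2}, the total tiling area inside any concentric $\Lambda_j$ is $\lesssim N_d d\,\alpha_j$, where $\alpha_j$ counts the tiling sites on $\partial\Lambda_j$. Applied to the outermost ring this only bounds the tiling area inside $\Lambda$ by $\lesssim N_d d\,k$, which is too weak by a factor $\sim N_d d$ since there are $\Theta(k)$ rings. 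But averaging that bound over the rings produces an \emph{outer} ring $\partial\Lambda_{j_1}$ (of side $\gtrsim\tfrac32 k$, say) with $\alpha_{j_1}=O(N_d d)$; feeding this back into \cref{cleaning-lemma-2} bounds the tiling area inside $\Lambda_{j_1}$ by $O(N_d^2 d^2)$. Iterating this zoom-in — each round roughly halves the remaining margin but, because $k=10^6N_d^2$ dominates $N_d^2$ by a huge constant, multiplies the tiling-area budget by a factor $<1$ — after $O(1)$ rounds one reaches a concentric square still containing $C$ whose number of internal rings exceeds its entire tiling-area budget. Since each of those internal rings must contain a tiling site, this is the desired contradiction. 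The constant $10^6$ and the power $N_d^2$ in $k$ are calibrated precisely so that this geometric decrease closes, including in the smallest relevant case $d=4$.

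I expect the main obstacle to be the quantitative bookkeeping of this iteration: one must check that the per-round shrinkage factor stays below $1$ for as many rounds as the worst case requires — which is where the largeness of $k/N_d^2$ enters — and one must handle tiling blobs that straddle the boundary of the inner square currently under consideration, which is done by counting only their portion inside that square at the cost of an additive $O(|\partial\Lambda'|)$ term that the accounting already absorbs. The reductions, the ring set-up, and each individual invocation of \cref{cleaning-lemma} or \cref{cleaning-lemma-2} are routine given the earlier results.
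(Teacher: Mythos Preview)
Your plan is exactly the paper's: reduce via \cref{lemma-signature} to a fixed signature commuting with the stabilizers, suppose every concentric boundary between $C$ and $\partial\Lambda$ carries a tiling site (else \cref{cleaning-lemma} finishes), and contradict this via \cref{cleaning-lemma-2}. The paper's execution is, however, lighter than your multi-round zoom-in. It never analyses connected tiling blobs or invokes the refined $\Omega(N_d/d)$ marker density; it uses only the coarse fact that every $N_d\times N_d$ sub-square not entirely in the TC sector contains at least one penalty (mismatch or period marker), so the number of tiling spins is at most $2N_d^2$ times the penalty count $\beta$. The iteration is then just two explicit rounds: one application of \cref{cleaning-lemma-2} to $\Lambda$ with the worst-case $\alpha=|\partial\Lambda|=36k$ bounds the fraction of tiling spins in $\Lambda$ below $10^{-4}$; an averaging over the outermost $k/10$ concentric boundaries (which sweep at least a tenth of $\Lambda$) then locates a single $\Lambda'\supset C$ with $f(\partial\Lambda')<10^{-3}$; a second application of \cref{cleaning-lemma-2} to this $\Lambda'$ yields a tiling-spin count below $k/3$, which contradicts the $\ge 9k/10$ tiling sites forced by the assumption that each of the $\ge 9k/10$ remaining concentric boundaries carries one. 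So your blob area/boundary bookkeeping can be replaced by a fixed grid of $N_d\times N_d$ squares, and the ``$O(1)$ rounds'' collapse to exactly two once you track fractions rather than raw counts.
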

      \begin{proof}
            We start with a few preliminary observations, and we refer the reader to \cref{fig:boundary} for verification.  The boundary $\partial\Lambda$ contains precisely $3\times 3k$ spins on each side, and thus $12\times 3k=36k$ spins overall.  We denote this spin count with $|\partial\Lambda|$ in the context of this proof.
            By \cref{cleaning-lemma-2} (assuming that for all $s\in\partial\Lambda:\sigma_s=\sigTiling$), we thus know that we can have \emph{at most} $\frac{4}{7}(1+4\times |\partial\Lambda|)<83k$ penalties from signature mismatches or period markers within $\Lambda$.

    		The overall area of $\Lambda$ encompasses $9k^2$ tiles, and we count $|\Lambda|=2\times 9k^2 + 2\times 3k$ spins. Every sub-square of size $N_d\times N_d$ which is not fully in the TC subspace carries a penalty $\ge1$---note that this holds true regardless of the bonus terms present in the tiling, as the period penalty and TC-tiling mismatch are both larger. This means that at most a fraction of
    		$$
    		\frac{83k}{|\Lambda|/N_d^2}=\frac{83 N_d^2}{18k+6}=\frac{83}{18\times10^6 + 6/N_d^2}<\frac{1}{10\,000}
    		$$
    		of \emph{spins} $s\in\Lambda$ can have signature $\sigma_s=\sigTiling$.  For a subset $A\subset\Lambda$, we denote this fraction with $f(A)$.

		So let us assume that $f(\Lambda)<1/10\,000$. Take $\Lambda$ and shrink it uniformly by at most $k/10$, by which we mean we shrink the square on each side by one tile at a time, i.e.\ while keeping smooth boundaries as in \cref{fig:toriccode}.  This sweeps a region $A$ which covers at least $1/10$ of the area of $\Lambda$, and thus in particular the number of spins $|A|\ge|\Lambda|/10$.  Since $f(\Lambda)<1/10\,000$, it follows that $f(A)<1/1000$.  This immediately implies that there exists a square $\Lambda'$ concentric with $\Lambda$ and such that $A\subset\Lambda'\subset \Lambda$, and which satisfies $f(\partial\Lambda')<1/1000$.

        \DeclareDocumentCommand{\crossFig}{ m m O{red} O{.067} }{%
          \draw[#3, thick] (#1-#4, #2-#4) -- (#1+#4, #2+#4);
          \draw[#3, thick] (#1-#4, #2+#4) -- (#1+#4, #2-#4);
        }
        \[
          \raisebox{-.45\height}{\tikz[draw=white]{
              \node[anchor=south west,inner sep=0] at (-1/6, -1/6) {\includegraphics[width=3.33333cm]{figures/gs-fin.png}};
              \draw (0, 0) rectangle (3, 3);
              \draw (1, 1) rectangle (2, 2);
              \draw[dashed] (0, 1) -- (3, 1);
              \draw[dashed] (0, 2) -- (3, 2);
              \draw[dashed] (1, 0) -- (1, 3);
              \draw[dashed] (2, 0) -- (2, 3);
              \foreach \x/\y in {1/0, 2/0, 3/.33333, 3/1.5, 0/.16667, 0/.83333, 0/1.5, 0/2.66667, .83333/3, 2.16667/3}
              \crossFig{\x}{\y};
              \draw[red] (1, 0) -- (2, 0);
              \draw[red] (3, .33333) -- (3, 1.5);
              \draw[red] (0, .16667) -- (0, .83333);
              \draw[red] (0, 1.5) -- (0, 2.66667);
              \draw[red] (.83333, 3) -- (2.16667, 3);
              \node[white] at (1.75, 1.75) {$C$};
              \node[white] at (2.75, 2.75) {$\Lambda$};
          }}
          \quad\mapsto\quad
          \raisebox{-.45\height}{\tikz[draw=white]{
              \node[anchor=south west,inner sep=0] at (-1/6, -1/6) {\includegraphics[width=3.33333cm]{figures/gs-fin.png}};
              \draw (1/6, 1/6) rectangle (3-1/6, 3-1/6);
              \draw (1, 1) rectangle (2, 2);
              \draw[dashed] (0, 1) -- (3, 1);
              \draw[dashed] (0, 2) -- (3, 2);
              \draw[dashed] (1, 0) -- (1, 3);
              \draw[dashed] (2, 0) -- (2, 3);
              \foreach \x/\y in {1.33333/.16667, 1.5/.16667, 2.83333/.66666, 2.83333/1, .16667/.66666, .16667/.83333, .16667/1.5, .16667/1.66666, 1.33333/2.83333, 1.5/2.83333}
              \crossFig{\x}{\y};
              \draw[red] (1.33333, .16667) -- (1.5, .16667);
              \draw[red] (2.83333, .66666) -- (2.83333, 1);
              \draw[red] (.16667, .66666) -- (.16667, .83333);
              \draw[red] (.16667, 1.5) -- (.16667, 1.66666);
              \draw[red] (1.3333, 2.83333) -- (1.5, 2.83333);
              \node[white] at (1.75, 1.75) {$C$};
              \node[white] at (2.6, 2.6) {$\Lambda'$};
          }}
        \]

        Two things may happen. If in between the centre square $C$ and $\Lambda'$ there exists another square $\Lambda''$ (i.e.\ with $\partial\Lambda''\subset\Lambda'\setminus C$) such that for any $s\in\partial \Lambda''$, we have $\sigma_s=\sigTC$, then  \cref{cleaning-lemma} immediately implies that $\sigma_i=\sigTC$ for all $i$ in the centre $C$, and the claim follows.

		It remains to analyse the case where no such square $\Lambda''$ exists.  We first apply \cref{cleaning-lemma-2} again, this time to $\Lambda'$: as $f(\partial\Lambda')<1/1000$, and $|\partial\Lambda'|\le|\partial\Lambda|\le 36k$, we know that there exist \emph{at most} $36k/1000\times 8\le k/3$ spins $s\in\Lambda'$ with $\sigma_s=\sigTiling$.  But since no $\Lambda''$ exists with a boundary $\partial\Lambda''$ completely with tiling signature, there have to be \emph{at least} $k-k/10=9k/10$ spins within $\Lambda'$ with a tiling signature (one within the boundary for each concentric square between $\Lambda'$ and $C$).  Contradiction, and the claim follows.
      \end{proof}

      All the above results lead trivially to
      \begin{corollary}
        All ground states of $\op H^{(d)}$ are fully supported in the TC subspace.
      \end{corollary}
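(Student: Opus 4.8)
The plan is to deduce the corollary by feeding \cref{lemma:ring-cut} into the reduction provided by \cref{lemma-signature}, thereby fixing the signature of an arbitrary ground state site by site. Fix a ground state $\omega$ and an arbitrary finite region $\Lambda_0$; the goal is to show $\omega(\Pi_{\tc,\Lambda_0}^\perp)=0$, as being ``fully supported in the TC subspace'' means exactly this for every $\Lambda_0$.

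First I would perform the two ``without loss of generality'' reductions set up in the paragraphs preceding \cref{lemma-signature}. Choose a finite region $R$ large enough to contain $\bar\Lambda$ for every square $\Lambda$ that will be invoked below, and let $\{M_i\}_i$ be the common eigenprojectors of the signature operators on $R$ and of the Toric Code stabilizers supported in $R$. These projectors commute with $\op H_{\bar\Lambda}$ — the Ising-type term $\op H_0$ is block diagonal in the TC/tiling splitting, $\op H_\tc$ and $\op H_\cl$ each preserve their own sector, and the stabilizers mutually commute and act trivially on the tiling sector — and with $\Pi_{\tc,\Lambda}^\perp$. Hence \cref{lemma-signature} reduces the claim to proving $\omega_i(\Pi_{\tc,\Lambda_0}^\perp)=0$ for every $i$ with $\omega(M_i)\neq 0$, so from now on I may assume $\omega$ carries a definite signature $\sigma$ on $R$ and that $\rho_{\bar\Lambda}$ commutes with all Toric Code stabilizers — precisely the hypotheses required by \cref{lemma:ring-cut} and, through it, by \cref{cleaning-lemma,cleaning-lemma-2}.

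Next, put $k:=10^6 N_d^2$. For each individual site $s\in\Lambda_0$ I would place a smooth-bounded square $\Lambda$ of side length $3k$ positioned so that $s$ lies in its central $k\times k$ block $C$; one such square per site is all that is needed. \Cref{lemma:ring-cut} then gives $\sigma_s=\mathrm{TC}$, and ranging over all $s\in\Lambda_0$ shows that $\sigma$ is identically $\mathrm{TC}$ on $\Lambda_0$. Since $\omega$ has a definite signature, its reduced state on $\Lambda_0$ is then supported entirely in the Toric Code subspace, i.e.\ $\omega(\Pi_{\tc,\Lambda_0}^\perp)=0$; unwinding the reduction via the last assertion of \cref{lemma-signature} transfers this to the original $\omega$, and as $\Lambda_0$ was arbitrary the corollary follows.

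I expect the only genuine care to be organisational rather than conceptual: the auxiliary region $R$ must be fixed before — and chosen large enough to contain — all the side-$3k$ squares swept out while scanning the sites of $\Lambda_0$, and the smooth-boundary convention of \cref{fig:boundary} has to be maintained throughout so that \cref{lemma:ring-cut}, and the cleaning lemmas it relies on, apply verbatim. With that bookkeeping in place every step is a direct appeal to an already-established lemma, which is why the deduction can fairly be called trivial.
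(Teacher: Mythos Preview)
Your proposal is correct and follows exactly the route the paper intends: the authors write ``All the above results lead trivially to'' the corollary, and you have spelled out precisely that triviality --- use \cref{lemma-signature} (twice, for signatures and for stabilizers) to reduce to a ground state with definite signature and stabilizer-commuting reduced density matrix on a sufficiently large region, then apply \cref{lemma:ring-cut} to force every site of $\Lambda_0$ into the TC sector. Your organisational remark about fixing the auxiliary region $R$ large enough in advance is the only bookkeeping needed, and it is handled correctly.
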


\section{Extensive Hamiltonian Perturbation}

We will prove that even under extensive Hamiltonian perturbation the ground state of our model
remains in the same phase as the Toric Code. In order to achieve this, we consider local perturbations of the form
$$\op V = \sum_{u,r} \op V_{u,r},$$
where $\op V_{u,r}$ acts on sites at a distance at most
$r$ from $u$, and $\norm{V_{u,r}}\le J e^{-\mu r}$ for some positive $J$ and $\mu$.
We
want to show that for the local Hamiltonians $\op H$ we constructed, in the
thermodynamic limit and for
sufficiently small $J$, the ground states of $\op H + \op V$ are automorphically equivalent to the ground states of $\op H$ (in the sense
of \cite{Bachmann_2011}), i.e.\ that the ground states can be connected by locally-generated unitary transformations.

We will show this by using a modified version of the proof presented in \cite{Bravyi_2011}, from
where we take the notation and terminology in what follows, but accounting for the following key differences:
\begin{enumerate}
	\item we have open and not periodic boundary conditions,
	\item  our Hamiltonian is not a sum of projectors, and
	\item before the threshold system size, the ground state does
	not have topological order.
\end{enumerate}
In particular, regarding 1., a solution of the stability problem for topological models besides periodic
boundaries is given in \cite{Stability_unpublished}. For the convenience of the reader, instead of just referring to \cite{Stability_unpublished},
we detail here the concrete solution for our particular case of interest.

Let us define, for every finite and rectangular $\Lambda$ with side $L$ larger than the threshold $N_d$, and $s\in[0,1]$,
$$\op H_\Lambda(s) = \op H^{(d)}_\Lambda + s\!\!\!\!\!\!\!\!\! \sum_{u:d(u,\Lambda^c) > d_L}\!\!\!\!\!\!\!\!\! \op V_u,$$
i.e.\ we only include in $\op H(s)$ the perturbation
terms which are at distance $d_L$ (to be determined later) away from the boundary.
We will show that, if $J$ is smaller than some $J_0$ (independent of $\Lambda$), then
$\op H(s)$ will have a spectral gap of $\order{1}$ for every $s\in [0,1]$, and this
is sufficient to prove the equivalence of the ground states of $\op H^{(d)} $ and $\op H^{(d)} +
\op V$ in the thermodynamic limit \cite{Bachmann_2011}. For the rest of the proof, we will fix
$\Lambda$ and write $\op H_s$ instead of $\op H_\Lambda(s)$ for simplicity.

The proof strategy of \cite{Bravyi_2011} can be summarised as follows: the aim is to show that, if
$J$ is sufficiently small, and if $\op H_s$ has a spectral gap of at least $1/2$ for every
$s \in [0,s^*]$, then the spectral gap at $s^*$ is actually larger than $3/4$. This ``bootstrapping'' procedure then
immediately implies that the gap of $\op H_s$ is never zero for every $s\in [0,1]$. Our Hamiltonian
$\op H_0$ does not have a spectral gap of $1$ as is the case in the original proof, but since it
is still a constant (independent of $\Lambda$) a simple rescaling will be sufficient to follow the
proof strategy.

As a first step to prove the ``bootstrapping'' argument one uses the so called quasi-adiabatic
evolution \cite{Bravyi_2010}, which allows to write a general perturbation as a block-diagonal
operator with respect to the projector $P_0$ on the ground state of $\op H_0$, plus some weak
boundary terms; then one uses the Topological Quantum Order of the Toric Code in order to decompose a
block-diagonal perturbation into a locally block-diagonal part plus a small residual perturbation.

To simplify notation, for each site $u\in \Lambda$ we will denote $d_u = d(u,\Lambda^c)$ the distance of
$u$ from the boundary of $\Lambda$. Given an Hermitian operator $O$, we will often use the
decomposition $O = \sum_{u,r} O_{u,r}$, where $u$ runs over the vertices in $\Lambda$ and $O_{u,r}$
is supported on a ball of radius $r$ around $u$. If $\norm{O_{u,r}}\le J f(r)$ for some positive $J$
and some function $f(r)$ with values in $[0,1]$ and decaying faster than any polynomial function, we will say that $O$
\emph{is quasi-local with strength $J$}. The operator $O_u = \sum_{r} O_{u,r}$, in which every term of the decomposition
acts on $u$, will be said to be supported \emph{around $u$}.

\begin{lemma}
  \label{lemma:tqo-stab-1}
  For every $s\in[0,1]$, $\op H_s $ has the same spectrum as
  \[ \op H_s^\prime = \op H_0 + \!\!\!\!\!\!\! \sum_{u: d_u > d_L} \!\!\!\!\!\!\! X_u(s) ,\]
  where $\comm{X_u(s)}{P_0} = 0$ and each $X_u(s)$ is a quasi-local operator with strength
  $\order{J}$ supported around $u$.
\end{lemma}

\begin{proof}
  We will write for simplicity
  $\op H_0 = \sum_{u} h_u$, where each $h_u$ acts around $u$ (so that each plaquette and star
  operator is associated to one site belonging to them).
  The proof will follow \cite[Lemma 7]{Bravyi_2011}, with the necessary modifications:
  we consider $U_s$ the quasi-adiabatic evolution
  (also known as spectral flow) associated to the Hamiltonian path $\op H_s$, generated by the
  operator
  \[ i  D_s =  \int_{-\infty}^\infty \dd t F(t) e^{i \op H_s t} \op V e^{-i \op H_s t}, \]
  where $F(t)$ is the weight function defined in \cite{Bravyi_2011}.
  The unitary $U_s$ is then defined by
  \[ U_s = \mcl S \exp( i \int_0^s \dd s^\prime D_{s^\prime} ), \]
  where $\mcl S$ denotes the $s$-ordered exponential. It is the unique solution to the differential
  equation
  \begin{align*}
    \dv{s} U_s &= i D_s U_s,\\
    U_0 &= \identity.
  \end{align*}
  Since it is locally generated, it satisfies Lieb-Robinson bounds \cite{Lieb_1972, Hastings_2006, Bravyi_2006,
    Nachtergaele_2006}, meaning that for $s\in[0,1]$ a quasi-local operator with strength $J$ supported around $u$ will be
  mapped by $U_s$ to another $\order{J}$-strength quasi-local perturbation supported around $u$, thus preserving locality.
  Moreover, we see that each $D_s(u) = -i \int \dd t F(t) e^{i \op H_s t} \op V_u e^{-i \op H_s t} $ is a quasi-local
  $\order{J}$-strength operator supported around $u$.
  We can then define
  \[ \op H_s^\prime = U_s^\dag \op H_s U_s.  \]
  Clearly $\op H_s^\prime$ has the same spectrum as $\op H_s$, while its ground state projector is
  $P_0$. We will use frequently the fact that the ground state projector $P_0(s)$ of $\op H_s$
  satisfies $U_s^\dag P_0(s) U_s = P_0$.
  Let us define
  \[ \mcl F_s(O) = \int_{-\infty}^\infty \dd{t}g(t) e^{i \op H_s t} O e^{-i \op H_s t}, \] where $g(t)$ is
  the filter function defined in \cite[Lemma 7]{Bravyi_2011}.
  It has the property that $\comm{\mcl F_s(O)}{P_0(s)} = 0$,
  and therefore $\comm{U_s^\dag \mcl F_s(O) U_s}{P_0} = 0$.
  Since $\op H_s$ has exponentially decaying interactions, $\mcl F_s$ will map quasi-local operators of strength $\order{1}$ supported around $u$ to
  quasi-local operators of strength $\order{J}$ supported around $u$ \cite[Lemma 2]{Bravyi_2011}. Moreover, it is
  easy to see that
  $\op H_s^\prime = U_s^\dag \mcl F_s(U_s \op H_s^\prime U_s^\dag) U_s$, so we have that
  \[ \op H_s^\prime = \op H_0 + \sum_{u \in \Lambda} \tilde h_u(s) + \sum_{u: d_u > d_L}\tilde V_u(s) ,\]
  where
  \begin{align*}
    \tilde h_u(s) & =  \qty[ U_s^\dag \mcl F_s(h_u) U_s - h_u]; \\
    \tilde{V}_u(s) & =  s U_s^\dag \mcl F_s(\op V_u) U_s.
  \end{align*}
  Each of the terms appearing in the above decomposition commutes with $P_0$.
  For each $v\in \Lambda$ such that $d_v \le d_L$, let $\pi(v)$ be the closest point to $v$ in $\Lambda$ such
  that $d_{\pi(v)} > d_L$ (and we make an arbitrary choice if it is not unique). Let $N(u) =
  \pi^{-1}(u)$ be the set of points which are sent to $u$ by $\pi$.
  We can then define for each $u$ such that $d_u > d_L$:
  \[ X_u(s) = \tilde h_u(s) + \tilde V_u(s) + \sum_{v \in N(u)} \tilde h_v(s).\] In order to
  conclude, we have to show that $\tilde h_u(s)$ and $\tilde V_u(s)$ have strength $\order{J}$ if
  $d_u > d_L$, while $\tilde h_v(s)$ is supported around $u=\pi(v)$ with strength
  $\order{Jf(d(u,v))}$ if $d_v \le d_L$, for some fast decaying function $f(d)$.
  This will guarantee that the sum $\sum_{v \in N(u)} \tilde h_v(s)$ is a quasi-local operator of
  strength $\order{J}$ around $u$.

  In order to do so, let us observe that if a
  quasi-local operator $O_u = \sum_{r} O_{u,r}$ such that $\norm{O_{u,r}} \le J f(r)$ satisfies that
  $O_{u,r} = 0$ for each $r \le r_0$, then it is also a quasi-local operator with strength $Jf(r_0)$
  supported at any point at distance $r_0$ from $u$.

  The terms $\tilde V_u(s)$ are supported around $u$ with $d_u > d_L$ by construction, and
  they are quasi-local with strength $\order{J}$ because of Lieb-Robinson bounds applied to
  $\mcl F_s$ and to $U_s$. Let us now analyse instead the terms of $\tilde h_u(s)$.

  For each $u$, we will further decompose
  $ \tilde h_u(s) = h^1_u(s) + h^2_u(s)$ as follows
  \begin{align*}
    h^1_u(s) = & U_s^\dag \mcl F_s(h_u - U_s h_u U_s^\dag) U_s ;\\
    h^2_u(s) = &  U^\dag_s\qty[\mcl F_s(U_s h_u U_s^\dag) - U_s h_u U_s^\dag] U_s.
  \end{align*}
  We will now threat the two terms independently. Let us start by observing that
  \[ h_u - U_s h_u U_s^\dag = i \int_0^s\dd{ s^\prime}U_{s^\prime} \comm{D_{s^\prime}}{h_u} U^\dag_{s^\prime}. \]
  Because of the quasi-locality of $D_{s^\prime}$, if $d_u > d_L$ the commutator
  $\comm{D_{s^\prime}}{h_u}$ will be a quasi-local Hamiltonian of strength $\order{J}$ supported around
  $u$. Applying again Lieb-Robinson bounds for $U_s$, this also hold for $h^1_u(s)$.

  If instead $d_u \le d_L$, we can expand $\comm{D_{s^\prime}}{h_u}$ in terms of the local
  decomposition of $D_{s^\prime} = \sum_{v,r} D_{v,r}(s^\prime)$. Any term whose support is
  disjoint from the support of $h_u$ will not contribute to the commutator, and in particular only
  terms with $r\ge d_L-d_u$ will be present. By the previous observation, it is supported around $\pi(u)$.
  Applying a the unitary rotation $U_s$ will preserve the locality and the strength, so also
  $h^1_u(s)$ will be of strength $\order{Jf(d_L-d_u)}$ supported around $\pi(u)$.

  Similarly we observe that
  \begin{align*}
    \mcl F_s(O) & = \int_{-\infty}^{\infty}\dd{t}g(t) e^{i \op H_s t} O e^{-i \op H_s t}\\
                            & = O + i \int_{-\infty}^{\infty} \dd{t}g(t) \int_0^t \dd \tau e^{i
                              \op H_s \tau } \comm{\op V}{O}e^{-i \op H_s \tau} \\
                            & = O + \int_{-\infty}^{\infty} \dd{\tau}\tilde g(\tau) e^{i \op H_s \tau}
                              \comm{\op V}{O}e^{-i \op H_s \tau},
  \end{align*}
  with
  \[ \tilde g(\tau) = \begin{cases}
      i \int_\tau^\infty\dd{t}g(t) & \text{if } \tau \ge 0,\\
      -i \int_{-\infty}^\tau\dd{t}g(t) & \text{if } \tau < 0,
    \end{cases}
  \]
  which decays faster than any polynomial. Therefore
  \begin{multline*}
    \mcl F_s(U_s h_u U_s^\dag) - U_s h_u U_s^\dag \\
    = \int_{-\infty}^\infty\dd{t}\tilde g(t) e^{i \op H_s t} U_s \comm{U_s^\dag \op V U_s}{h_u}
    U_s^\dag e^{-i \op H_s t}.
  \end{multline*}
  As we did before, we now use the fact that $U_s^\dag \op V U_s$ has a quasi-local structure, so that we
  can analyse two different cases: if $d_u > d_L$, then $\comm{U_s^\dag \op V U_s}{h_u}$ will be
  quasi-local with strength $\order{J}$, while if $d_u \le d_L$ there will be no terms with support
  smaller than $d_L -d_u$, so it is supported around $\pi(u)$ with strength $\order{Jf(d_L-d_u)}$.
  Again, applying the rotation $U_s^\dag \cdot U_s$ will not change these
  properties, so we have also proven that they hold for $h^2_u(s)$. This concludes the proof.
\end{proof}

For the second part of the proof, we will need the following lemma instead of \cite[Lemma 3]{Bravyi_2011}.

\begin{lemma}
  \label{lemma:tqo-stab-2}
  Fix a site $u$ and let $X_u = \sum_{r}X(r)$ be a quasi-local operator with strength $J$ supported
  around $u$, and satisfying $\comm{X_u}{P_0} = 0$.
  Then we can decompose $X_u$ as
  \[ X_u = \sum_{r\ge N_d} W_{u,r} + \Delta_u,  \]
  where $\comm{W_{u,r}}{P_0} = 0$, and $W_u = \sum_{r} W_{u,r}$ has strength $\order{J}$, and
  $\norm{\Delta_u}$ decays faster than any power of $d_u$.
\end{lemma}
\begin{proof}
  We decompose $X_u = \sum_{r} X(r)$, and by assumption $\norm{X(r)}$ decays faster than any power
  of $r$. Since we choose the side of $\Lambda$ to be larger than $N_d$, the ground state $P_0$ of
  $\op H_0$ is the Toric Code with open boundary condition, which satisfies topological quantum
  order for operators that do not act on the boundary spins. Therefore, for each $r < d_u$,
  we have that $P_0X(r)P_0$ is a constant multiple of $P_0$. By adding constants we can assume
  $P_0 X(r)P_0=0$. We define $\Delta_u = P_0X_u = \sum_{r \ge d_u}P_0X(r)P_0 $ and $X_u^\prime = X_u - \Delta_u$.
  By construction, we have that $P_0 X_u^\prime = 0$ and that $\norm{\Delta_u}$ decays faster than any
  power of $d_u$. We want to show now that we can treat $X_u^\prime$ in the same way as in \cite[Lemma 3]{Bravyi_2011},
  obtaining a decomposition $X_u^\prime = \sum_{u, r}W_{u,r}$, with $\comm{W_{u,r}}{P_0}=0$ and of
  strength $\order{J}$.
  The original proof requires a property, denoted TQO-2, which our model satisfies only partially:
  if $A$ is a square contained in $\Lambda$, and $C$ is the square containing the first and second
  neighbours of $A$ in $\Lambda$, and we denote by $P_C$ the projector on the ground state
  of the Hamiltonian $\op H_0$ restricted to $C$ (i.e.\ considering only the interaction terms which
  intersect $A$ and are contained into $C$), then TQO-2 implies that for every operator $O_A$
  supported on $A$ it holds that \cite[Corollary 1]{Bravyi_2011}
  \begin{equation}\label{eq:tqo-2}
    \norm{O_A P_0} = \norm{O_A P_C}.
  \end{equation}
  This is only true for our model when $C$ has side larger than $N_d$, since in that case $P_C$ is
  the projector on the Toric Code ground state in $C$, which does satisfy TQO-2.
  We can therefore apply \cref{eq:tqo-2} to the operator
  $\sum_{r=1}^l X_u^\prime(r)$ if $N_d \le l < d_u$, so in that case we obtain the
  following bound
  \begin{multline} \norm{\sum_{r=1}^l X_u^\prime(r) P_{B_u(l+2)}} = \norm{\sum_{r=1}^l
      X_u^\prime(r) P_0} \\
      = \norm{X_u^\prime P_0} + \norm{\sum_{r\ge l} X_u^\prime P_0} \le J f(l), \label{eq:tqo-2-cor}
    \end{multline}
    for some $f(l)$ decaying faster than any polynomial, and $B_u(l)$ is a ball of radius $l$
    centred at $u$.
    We will now show how to decompose $X_u^\prime$.
    We can consider the following decomposition of the identity $\identity = \sum_{m=N_d}^{N+1}
    E_m$, where $N$ is such $B_u(N) = \Lambda$,
    \begin{align*}
      E_{N_d} &= \identity - P_{B_u(N_d)}, & \\
      E_{m} &= P_{B_u(m-1)} - P_{B_u(m)}, & \text{for } N_d < m \le N,\\
      E_{N+1} &= P_{B_u(N)} = P_0. &\\
    \end{align*}
    Since $E_{N+1} X^\prime_u = X^\prime_u E_{N+1} = 0$, we have that
    \[ X_u^\prime = \sum_{\substack{1 \le q \le N\\ N_d\le p,r \le N }} E_p X_u^\prime(q) E_r =
      \sum_{j \ge 2N_d} Y(j) + \sum_{j \ge N_d} Z(j),
    \]
    where
    \begin{align*}
      Y(j) = \sum_{\substack{N_d \le p,r \le N \\ p+r=j } } E_p \sum_{q=1}^{\max (p,r) - 2} X_u^\prime(q) E_r, \\
      Z(j) = \sum_{N_d \le p,r, \le j+1} E_p X_u^\prime(j) E_r .
    \end{align*}
    Both $Y(j)$ and $Z(j)$ are Hermitian, they annihilate $P_0$, and they are supported on
    ${B_{j+1}(u)}$. We are only left to show that their norm is decaying fast in $j$.
    This is trivial for $Z(j)$, since its norm is bounded by the norm of $X_u^\prime(j)$, while each
    of the $j-1$ terms in $Y(j)$ corresponding to the different choices of $p$ and $r$ can be
    bounded using \cref{eq:tqo-2-cor}. Defining $W_{u,r} = Y(r-1) + Z(r-1)$ concludes the proof.
\end{proof}

Finally, we want to apply \cite[Lemma 5]{Bravyi_2011} in order to prove that a perturbation
$\sum_{u}\sum_{r\ge N_d} W_{u,r}$ which does not intersect the boundary and such that
$\comm{P_0}{W_{u,r}} = 0$ is relatively bounded by $\op H_0$---we can use standard perturbation
theory for the terms which do intersect the boundary. Again, we run into the problem that our model
does not satisfy TQO-2 for regions smaller than $N_d$. But we can trivially modify the proof of such
Lemma, and in particular of \cite[Proposition 1]{Bravyi_2011}, by requiring that $r\ge N_d$ (since by
construction we only have operators $W_{u,r}$ with such restriction on $r$). In that case, given a
partition of $\Lambda$ into disjoint rectangular boxes $B_1 \cup \dots \cup B_M$ of size $r\ge N_d$, we can
define for every binary string $\mcl Y \in \{0,1\}^M$ an operator
\[ R_{\mcl Y} = \prod_{a=1}^M [ \mcl Y_a (1-P_{B_a}) + (1-\mcl Y_a) P_{B_a} ].\]
We say that a box $B_a$ is occupied if $\mcl Y_a =1$.
We then see that if $W$ is an operator acting on a square $A$ of size $r$, which does not intersect
the boundary of $\Lambda$, and such that $WP_0 = P_0W = 0$, then $R_{\mcl Y} W
R_{\mcl Z} \neq 0$ if and only if $A$ intersects a box occupied in $\mcl Y$, a box occupied in $\mcl
Z$, and the only differences in the configurations $\mcl Y$ and $\mcl Z$ are in boxes intersecting $A$.
From here we can follow verbatim the proof of \cite[Proposition 1]{Bravyi_2011}.
We can conclude then with the following lemma.

\begin{lemma}
  There exists a positive function $G(L)$, growing slower than any polynomial in $L$, such that
  if for $\Lambda$ of size $L$ we have that $d_L \ge G(L)$ then the spectrum of $\op H_s$ is contained in
  \[ \bigcup_k [k(1-b)-\delta, k(1+b)+\delta]\]
  for every $s\in [0,1]$, where $k$ runs over the spectrum of $\op H_0$, $b=\order{J}$, and $\delta$
  vanishes as $L$ goes to infinity.
\end{lemma}
\begin{proof}
  Using \cref{lemma:tqo-stab-1} and \cref{lemma:tqo-stab-2}, we see that the spectrum of $\op H_s$
  is the same as the spectrum of
  \[ \op H_0 + \sum_{u: d_u > d_L} (W_u + \Delta_u), \]
  where $W_u = \sum_r W_{u,r}$ has strength $\order{J}$ and satisfies $\comm{W_{u,r}}{P_0}=0$,
  while $\norm{\Delta_u}$  decays faster than any power of $d_u$.
  Then by \cite[Lemma 5]{Bravyi_2011} with the modification presented before, the spectrum of $\op H_0 + \sum_u \sum_{r \le d_u} W_{u,r}$ is
  contained in $\bigcup_k [k(1-b), k(1+b)]$ for some $b=\order{J}$.

  The rest of the perturbation $R = \sum_{u: d_u > d_L} \qty(\Delta_u + \sum_{r > d_u} W_{u,r})$
  is treated by bounding its norm. Since in $\op H_s$ we only included
  perturbations that originally act at distance at least $d_L$ from the boundary, and that at any
  given distance $d$ from the boundary of $\Lambda$ there are $\order{L-d}$ sites, we have that the
  total norm of the perturbation $R$ can be bounded by $C\sum_{d\ge d_L} (L-d) f(d)$,
  where $f(d)$ decays faster than any polynomial and $C$ is some positive constant. Therefore, being
  the tail of a discrete convolution, $\sum_{d\ge d_L} (L-d) f(d)$ decays faster than any polynomial
  in $d_L$, while grows polynomially in $L$. We can then choose $d_L$ growing slower than any power
  and still guarantee that there exists a constant $\delta$ which vanishes in the limit
  $L \to \infty$, such that the spectrum of $\op H_s$ is contained in
  $\bigcup_k [ k(1-b)-\delta, k(1+b) + \delta]$.
\end{proof}

\end{document}